\definecolor{red}{rgb}{0.7,0.15,0.15}
\definecolor{green}{rgb}{0,0.5,0}
\definecolor{blue}{rgb}{0,0,0.7}
\numberwithin{equation}{section}
\newtheorem{theorem}{Theorem}[section]
\newtheorem{assumption}[theorem]{Assumption}
\newtheorem{example}[theorem]{Example}
\newtheorem{lemma}[theorem]{Lemma}
\newtheorem{definition}[theorem]{Definition}
\newtheorem{remark}[theorem]{Remark}
\newcommand\cB{\mathcal B}
\newcommand\cE{\mathcal E}
\newcommand\cF{\mathcal F}
\newcommand\cG{\mathcal G}
\newcommand\cH{\mathcal H}
\newcommand\cL{\mathcal L}
\newcommand\cS{\mathcal S}
\newcommand\cT{\mathcal T}
\newcommand\cY{\mathcal Y}
\newcommand\sK{\mathscr K}
\newcommand\sM{\mathscr M}
\newcommand\sN{\mathscr N}
\newcommand\sY{\mathscr Y}
\newcommand{\smallertext}[1]{\text{\fontsize{5}{5}\selectfont$#1$}}
\newcommand{\smalltext}[1]{\text{\fontsize{4}{4}\selectfont$#1$}}
\newcommand\fP{\mathfrak{P}}
\def \D{\mathbb{D}}
\def \E{\mathbb{E}}
\def \F{\mathbb{F}}
\def \G{\mathbb{G}}
\def \L{\mathbb{L}}
\def \N{\mathbb{N}}
\def \P{\mathbb{P}}
\def \Q{\mathbb{Q}}
\def \R{\mathbb{R}}
\def \S{\mathbb{S}}
\def \Y{\mathbb{Y}}
\newcommand{\bcdot}{\boldsymbol{\cdot}}
\newcommand{\1}{\mathbf{1}}
\def\d{\mathrm{d}}
\DeclareMathOperator*{\esssup}{ess\,sup}
\begin{document}
	
	\title{Robust Hedging of American Options via Aggregated Snell Envelopes}

	\author{Marco {\sc Rodrigues}\footnote{ETH Zurich, Department of Mathematics, Switzerland, marco.rodrigues@math.ethz.ch. The author thanks Dylan Possama\"i for feedback and insightful discussions.}}
	
	\date{June 17, 2025}
	
	\maketitle
	
	\begin{abstract}
		We construct an aggregator for a family of Snell envelopes in a nondominated framework. We apply this construction to establish a robust hedging duality, along with the existence of a minimal hedging strategy, in a general semi-martingale setting for American-style options. Our results encompass continuous processes, or processes with jumps and non-vanishing diffusion. A key application is to financial market models, where uncertainty is quantified through the semi-martingale characteristics.
		
		\medskip
		\noindent{\bf Key words:}  \vspace{5mm}   American option, characteristics, hedging, model uncertainty, semi-martingales
	\end{abstract}
	

	\section{Introduction}
	
	We study the problem of superhedging American-style options in financial markets where the underlying model is uncertain and potentially incomplete. Compared to classical settings where a single probability measure describes the market, we suppose, in the sense of Knightian uncertainty, that the true model lies within a class $\fP$. The goal is to determine the superreplication price, that is, the smallest initial capital $y \in \R$ that allows for a (self-financing) trading strategy $Z$ such that $y + Z \bcdot S \geq \xi$, $\P$--a.s., for all $\P \in \fP$, where $S$ is the discounted price process of $d$ risky assets, $Z \bcdot S$ is the cumulative gains process, and $\xi$ is the payoff of an option that can be exercised at any moment over the finite time horizon $[0,T]$. We introduce a dual problem, prove the absence of a duality gap, and show that the minimal initial capital can be attained.
	
	\medskip
	The problem of hedging European-style options in incomplete markets has been studied extensively. A foundational result is the optional decomposition introduced by \citeauthor*{el1995dynamic} \cite{el1995dynamic} for continuous It\^o--diffusion-type processes, where perfect hedging is not always possible. This decomposition was extended by \citeauthor*{kramkov1996optional} \cite{kramkov1996optional} to locally bounded semi-martingale models of financial markets, and, subsequently, by \citeauthor*{follmer1997optional} \cite{follmer1997optional} to semi-martingale models with unbounded jumps. \citeauthor*{follmer1997optional2} \cite{follmer1997optional2} then applied techniques used in proving the optional decomposition in the study of hedging problems with constraints. In the context of mathematical finance, \citeauthor*{delbaen1999compactness} \cite{delbaen1999compactness} provided another proof of the optional decomposition theorem in the framework of $\sigma$-martingale measures, which are naturally related to no-arbitrage-type pricing arguments in financial market models with unbounded jumps; see \citeauthor*{delbaen1998fundamental} \cite{delbaen1998fundamental}.
	
	\medskip
	In the works mentioned previously, the probability law that governs the stock price process $S$ is assumed to be exactly known. However, in many situations, the law might only be partially known, and the true probability law lies within a set $\fP$ of possible models. This framework is known as Knightian uncertainty; see \citeauthor*{neufeld2015knightian} \cite{neufeld2015knightian}. In this context, the pricing of European-style contingent claims in continuous-time and for continuous processes has been studied, for example, by \citeauthor*{neufeld2013superreplication} \cite{neufeld2013superreplication}, \citeauthor*{nutz2012superhedging} \cite{nutz2012superhedging}, \citeauthor*{peng2019nonlinear} \cite{peng2019nonlinear}, \citeauthor*{possamai2013robust} \cite{possamai2013robust}, \citeauthor*{soner2011martingale} \cite{soner2011martingale} and \citeauthor*{soner2013dual} \cite{soner2013dual}. The approaches in these works rely on the Doob--Meyer decomposition to construct minimal superhedging strategies; thus, they are restricted to market models that are complete. There is also work by \citeauthor*{avellaneda1995pricing} \cite{avellaneda1995pricing}, \citeauthor*{denis2006theoretical} \cite{denis2006theoretical}, \citeauthor*{dolinsky2014martingale} \cite{dolinsky2014martingale}, \citeauthor*{lyons1995uncertain} \cite{lyons1995uncertain} and \citeauthor*{song2011some} \cite{song2011some}, in which they prove duality formulas for continuous processes in an uncertain volatility setting.
	
	\medskip
	An important advancement was provided by \citeauthor*{nutz2015robust} \cite{nutz2015robust} (see also \citeauthor*{bouchard2015arbitrage} \cite{bouchard2015arbitrage} for a related result in discrete-time), who proved a robust hedging duality for European-style options and showed the existence of a corresponding minimal superhedging strategy under model uncertainty, including settings with jumps. These remarkable results were the main inspiration for this work. In \cite{nutz2015robust}, the idea is to formulate the dynamic superhedging price as the conditional sublinear expectation of the European-style option, whose construction relies on results by \citeauthor*{nutz2013constructing} \cite{nutz2013constructing}; this corresponds to the existence of an aggregator for a family of conditional expectations in a nondominated setting. The superhedging price process is then a $\fP$--super-martingale, and an application of the robust optional decomposition theorem \cite[Theorem~2.4]{nutz2015robust} to the regularisation of this super-martingale yields the minimal superhedging strategy and establishes the desired hedging duality; see also \citeauthor*{biagini2021reduced} \cite{biagini2021reduced} and \citeauthor*{biagini2019reduced} \cite{biagini2019reduced} for other applications of the robust optional decomposition result. 
	
	\medskip
	All the previously mentioned works consider the problem of hedging European-style options. In the discrete-time setting and under model uncertainty, American-style options have been studied by \citeauthor*{aksamit2019robust} \cite{aksamit2019robust}, \citeauthor*{bayraktar2017arbitrage} \cite{bayraktar2017arbitrage,bayraktar2017super} and \citeauthor*{bayraktar2015hedging} \cite{bayraktar2015hedging}. Duality results concerning American-style options in continuous-time under model uncertainty are less studied in the literature. Nonetheless, there is a related duality result by \citeauthor*{nutz2015optimal} \cite{nutz2015optimal} for the subhedging (buyer's) price of an American-style option in continuous-time with continuous processes under volatility uncertainty. Although related, the subhedging problem involves an inf--sup formulation, which requires a different approach; see also \citeauthor*{bayraktar2014robust} \cite{bayraktar2014robust}. However, there are works on optimal stopping under model uncertainty in dominated settings, that is, when there exists a reference measure with respect to which all models are absolutely continuous, by \citeauthor*{arharas2023optimal} \cite{arharas2023optimal}, \citeauthor*{belomestny2016optimal} \cite{belomestny2016optimal} and \citeauthor*{zamfirescu2003optimal}~\cite{zamfirescu2003optimal}. In the setting of nonlinear expectations, optimal stopping problems were studied by \citeauthor*{bayraktar2015optimal}~\cite{bayraktar2015optimal} and \citeauthor*{ekren2014optimal} \cite{ekren2014optimal} in a fully nonlinear setting, \citeauthor*{grigorova2020optimal} \cite{grigorova2020optimal} and \citeauthor*{li2023optimal} \cite{li2023optimal} in a $g$-expectation setting, \citeauthor*{li2025optimal} \cite{li2025optimal} in a $G$-expectation setting, and \citeauthor*{grigorova2024optimal} \cite{grigorova2024optimal,grigorova2025non,grigorova2025nonb} in a setting involving risk measures.
	
	\medskip
	We adopt the methodology of \cite{nutz2015robust} in the context of American-style options, which naturally leads to the problem of aggregating a family of Snell envelopes in a nondominated setting. In the case of European-style options, the dynamic superhedging price corresponds to a suitable regularisation of the conditional sublinear expectations associated with $\fP$ and the option payoff. These conditional sublinear expectations can be constructed as the suprema of expectations of the option over a collection of probability measures, using analytic sets and measurable selection arguments; see \cite{nutz2013constructing}. Since the expectation can be viewed as a mapping from probability measures to integrals, it is measurable with respect to the Borel $\sigma$-algebra generated by the topology of convergence in distribution. Consequently, the resulting supremum inherits an appropriate form of measurability, more precisely, upper semi-analyticity. In contrast, when taking the expectation of a Snell envelope, we are dealing with the expectation of an object that itself depends on the probability measure and whose measurability in the probability parameter is not immediately evident and does not follow from existing results in the literature. We are not integrating a fixed random variable, but rather working with a collection of random variables indexed by probability measures, for which measurability with respect to the probability parameter is not immediately evident. This added dependence constitutes the main technical challenge in constructing the aggregator. However, \cite{nutz2013constructing} serves as a cornerstone for the present work.
	
	\medskip
	We also point out a natural connection to Snell envelopes under sublinear expectations in a nondominated setting, which further highlights the increased complexity compared to the classical Snell envelope under linear expectation. The construction of the aggregator is carried out on the Skorokhod space and relies on the assumption that the American-style claim $\xi$ is of class $(D)$, without requiring any $\omega$-regularity. Although this is closely connected to reflected second-order backward stochastic differential equations (see \citeauthor*{matoussi2013second} \cite{matoussi2013second,matoussi2021corrigendum} and \citeauthor*{matoussi2014second} \cite{matoussi2014second}) in the particular case where the generator is identically zero, we instead work here in full generality under only minimal assumptions on the set $\fP$.
	
	\medskip
	The hedging duality presented here relies on a saturation property of $\fP$ and a technical condition introduced in \Cref{def_dominating_diffusion}.  A closely related and elegant condition, the \textit{dominating diffusion property}, was introduced in \cite[Definition 2.2]{nutz2015robust}. It is well suited to the one-dimensional setting to derive the robust optional decomposition \cite[Theorem 2.4]{nutz2015robust}. In the multi-dimensional case, that is, when the stochastic integral is a vector stochastic integral, however, this condition does not always suffice, as demonstrated by \Cref{exp_robust}. Nonetheless, the proof of \cite[Theorem 2.4]{nutz2015robust} remains valid when the dominating diffusion property is replaced by the slightly stronger, though closely related, condition described in \Cref{def_dominating_diffusion}. We repeat the argument from \cite{nutz2015robust} and clarify where the stronger condition comes into play in the multi-dimensional setting. There is also related work on European-style options, although different in methodology, in the jump setting by \citeauthor*{bouchard2021quasi} \cite{bouchard2021quasi}. There, the authors establish a robust hedging duality in a jump setting without the dominating diffusion assumption, but restricted to more ‘regular’ options, using functional It\^o calculus.
	
	\medskip
	The main difference between our approach and previous results is that we establish a hedging duality for American-style options in a general semi-martingale setting under uncertainty, encompassing both continuous processes and those with jumps. To the best of our knowledge, this duality is new even in the continuous case. We further prove the existence of the minimal initial capital and construct a corresponding superhedging strategy for American-style options.
	
	\medskip 
	The remainder of this paper is organized as follows. In \Cref{sec_setup}, we describe the setup, introduce the notation, and state the assumptions on the family $\fP$ needed to construct the aggregator of the Snell envelopes in \Cref{thm_construction_capital_process}. In \Cref{sec_duality}, we use the existence of the aggregator along with the optional decomposition theorem to derive a robust hedging duality for American-style options in \Cref{thm_duality}. As part of this, we restate and reprove the robust optional decomposition theorem \cite[Theorem 2.4]{nutz2015robust} using our modified dominating diffusion property from \Cref{def_dominating_diffusion}, and highlight the point in the proof where this stronger notion becomes essential in the multi-dimensional setting. Finally, \Cref{sec_proof_aggregation} is devoted to the proof of \Cref{thm_construction_capital_process}.

	\section{Setup and the aggregation result}\label{sec_setup}
	
	\subsection{Preliminaries and assumptions}
	
	Let $d$ be a positive integer. We denote by $\Omega$ the space consisting of all c\`adl\`ag paths $\omega : [0,\infty) \longrightarrow \R^d$ with $\omega_0 = 0$, by $\F = (\cF_t)_{t \in [0,\infty)}$ the canonical (raw) filtration generated by the canonical process $X = (X_t)_{t \in [0,\infty)}$, and we let $\cF \coloneqq \sigma(\cup_{t\in[0,\infty)}\cF_t)$ and $\cF_{0-}\coloneqq \{\varnothing,\Omega\}$. We endow $\Omega$ with the Skorokhod topology, which in turn makes $\Omega$ Polish and yields $\cB(\Omega) = \cF$ (see \citeauthor*{jacod2003limit} \cite[Theorem VI.1.14]{jacod2003limit}). We denote by $\fP(\Omega)$ the collection of all probability measures on $(\Omega,\cF)$ and endow it with the topology of convergence in distribution, that is, the coarsest topology for which $\P \longmapsto \E^\P[f]$ is continuous for every bounded and continuous function $f : \Omega \longrightarrow \R$. Since $\Omega$ is Polish, so is $\fP(\Omega)$ (see \citeauthor*{aliprantis2006infinite} \cite[Theorem 15.15]{aliprantis2006infinite} or \citeauthor*{bertsekas1978stochastic} \cite[Proposition 7.23]{bertsekas1978stochastic}), and the Borel $\sigma$-algebra $\cB(\fP(\Omega))$ is generated by the mappings $\fP(\Omega) \ni \P \longmapsto \P[A] \in [0,1]$, $A \in \cB(\Omega)$ (see \cite[Proposition 7.25]{bertsekas1978stochastic}). For background and further details on the Skorokhod space, see also \citeauthor*{dellacherie1978probabilities} \cite[pp.~145\,ff.]{dellacherie1978probabilities}, \citeauthor*{jacod2003limit} \cite[Chapter~VI]{jacod2003limit} and \citeauthor*{karoui2013capacities} \cite[Section 4.1]{karoui2013capacities}. We frequently use Galmarino's test (see \cite[Theorem IV.99--101]{dellacherie1978probabilities}) to establish $\F_-$, $\F$ or $\F_+$-adaptedness, for example. Furthermore, we also rely on results from \citeauthor{weizsaecker1990stochastic} \cite{weizsaecker1990stochastic} and \citeauthor*{dellacherie1978probabilities} \cite{dellacherie1978probabilities,dellacherie1982probabilities} to deal with non-complete filtrations.
	
	\medskip
	For a generic filtration $\G = (\cG_t)_{t \in [0,\infty)}$ on $\Omega$, with convention $\cG_{0-}\coloneqq \{\varnothing,\Omega\}$, we write $\G_{-} = (\cG_{t-})_{t \in [0,\infty)}$ for the left-limit filtration, that is, $\cG_{t-} \coloneqq \sigma(\cup_{s < t}\cG_s)$ for $t \in(0,\infty)$. We write $\G_+ = (\cG_{t+})_{t\in[0,\infty)}$ for the corresponding right-continuous modification, that is, $\cG_{t+} \coloneqq \cap_{s>t}\cG_s$ for $t \in [0,\infty)$; we still use the convention $\cG_{(0+)-} = \cG_{0-} = \{\varnothing,\Omega\}$. The $\G$-predictable $\sigma$-algebra on $\Omega\times[0,\infty)$ is generated by all real-valued, $\G_-$-adapted processes with left-continuous paths on $(0,\infty)$. Note that the predictable $\sigma$-algebras relative to $\G_{-}$, $\G$ and $\G_{+}$ are all identical. For $\P \in \fP(\Omega)$, we define $\G^\P$ as the filtration obtained by augmenting each $\sigma$-algebra in the original filtration (including $\cG_{0-}$) with all $(\cF,\P)$--null sets. A property holds $\P$--a.s. if it holds outside an $(\cF,\P)$--null set.
	
	\medskip
	The concatenation $\omega\otimes_t\tilde\omega \in \Omega$ of two paths $(\omega,\tilde\omega) \in \Omega \times \Omega$ at $t \in [0,\infty)$ is defined as
	\begin{equation*}
		(\omega \otimes_t \tilde\omega)_s \coloneqq \omega_s \1_{\{0 \leq s < t\}} + (\omega_{t} + \tilde\omega_{s-t})\1_{\{t \leq s\}}, \; s \in [0,\infty).
	\end{equation*}
	
	For $\P \in \fP(\Omega)$ and $t \in [0,\infty)$, there exists a $\P$--a.s. unique stochastic kernel $(\P^t_\omega)_{\omega \in \Omega}$ on $(\Omega,\cF)$ given $(\Omega,\cF_t)$ such that $\E^{\P^\smalltext{t}_\smalltext{\cdot}}[\xi]$ is a version of the conditional expectation $\E^\P[\xi|\cF_t]$ for any (Borel-)measurable function $\xi$ with values in $[-\infty,\infty]$; we use the convention $\infty-\infty = -\infty$ throughout. For the existence of this kernel, we refer to \citeauthor*{kallenberg2021foundations} \cite[Theorem 8.5]{kallenberg2021foundations} or \citeauthor*{dellacherie1978probabilities} \cite[pp.~78\,ff.]{dellacherie1978probabilities}.
	Exactly as in \citeauthor*{stroock1997multidimensional} \cite[page 34]{stroock1997multidimensional}, we choose $(\P^t_\omega)_{\omega \in \Omega}$ in such a way that, additionally,
	\begin{equation*}
		\P^t_\omega\big[ X_{\cdot \land t} = \omega_{\cdot\land t}\big] = 1, \; \omega \in \Omega,
	\end{equation*}
	holds. We then refer to $(\P^t_\omega)_{\omega \in \Omega}$ as the regular conditional probability distribution (r.c.p.d.) of $\P$ given $\cF_t$. We define $\P^{t,\omega} \in \fP(\Omega)$ by
	\[
	\P^{t,\omega}[A] \coloneqq \P^{t}_{\omega}[\omega\otimes_t A], \; A \in \cF, \;  
	\omega\otimes_t A \coloneqq \{\omega\otimes_t\tilde\omega \,|\, \tilde\omega \in A\},
	\]
	and $\xi^{t,\omega} : \Omega \longrightarrow [-\infty,\infty]$ is given by
	\[
	\xi^{t,\omega}(\tilde\omega) \coloneqq \xi(\omega\otimes_t\tilde\omega), \; \tilde\omega \in \Omega.
	\]
	Then 
	\[
	\E^{\P^{t,\omega}}[\xi^{t,\omega}] = \E^{\P^\smalltext{t}_\smalltext{\omega}}[\xi], \; \omega \in \Omega,
	\]
	and thus $\E^{\P^{t,\cdot}}[\xi^{t,\cdot}]$ is also a version of $\E^\P[\xi|\cF_t]$. 
	
	\begin{remark}
		For $t \in [0,\infty)$, it can be readily verified that
		\[
		\P^{t,\omega}[A] = \P\big[X_{t+\cdot}-X_t \in A\big|\cF_t\\
		\big](\omega), \; \textnormal{$\P$--a.e. $\omega \in \Omega$.}
		\]
		This implies that $(\mathbb{P}^{t, \omega})_{\omega \in \Omega}$ is the conditional distribution of the shifted process $X_{t+\cdot} - X_t$ given $\mathcal{F}_t$. 
	\end{remark}
	
	For a process $Z = (Z_s)_{s \in [0,\infty)}$ and $(\omega,t)\in\Omega\times[0,\infty)$, we write $Z^{t,\omega}_{t\smallertext{+}\smallertext{\cdot}} \coloneqq (Z_{t+\cdot})^{t,\omega}$ for the process
	\[
	(Z^{t,\omega}_{t+\cdot})_s(\tilde\omega) = Z_{t+s}(\omega\otimes_t\tilde\omega), \; (\tilde\omega,s) \in \Omega \times [0,\infty).
	\]
	
	\medskip
	We fix a family $(\fP(t,\omega))_{(\omega,t)\in\Omega\times[0,\infty)}$ consisting of subsets $\fP(t,\omega) \subseteq \fP(\Omega)$ that satisfy
	\[
	\fP(t,\omega_{\cdot\land t}) = \fP(t,\omega).
	\]
	Since $\fP(0,\omega)$ is independent of $\omega$, we simply denote it by $\fP$. This set will conceptually serve to formalize the uncertainty in the probability law of the stock price process when establishing the hedging duality.
	
	\medskip
	Our main aggregation result will be proved under the following conditions on this family. They enable us to define the aggregator of the family of Snell envelopes. We recall that a subset of a Polish space is analytic if it is the image of a continuous map defined on a Polish space; see \cite[Section 8.2]{cohn2013measure} or \cite[Section 7.6]{bertsekas1978stochastic} for further background.
	
	\begin{assumption}\label{ass::measures}
		For all $0 \leq t < \infty$ and $\P \in \fP$,
		\begin{enumerate}
			\item[$(i)$] $\{(\omega,\P^\prime) \,|\, \omega \in \Omega,  \; \P^\prime \in \fP(t,\omega)\} \subseteq \Omega \times \fP(\Omega)$ is analytic,
			\item[$(ii)$] $\P^{t,\omega} \in \fP(t,\omega)$, for {\rm$\P$--a.e.} $\omega\in\Omega$,
			\item[$(iii)$] if $\Q$ is a stochastic kernel on $(\Omega,\cF)$ given $(\Omega,\cF_{t})$ such that $\Q(\omega) \in \fP(t,\omega)$, for {\rm$\P$--a.e.} $\omega \in \Omega$, then the probability measure
			\begin{align*}
				\overline{\P}[A] \coloneqq & \int_{\Omega}\int_{\Omega} \big(\1_A\big)^{t,\omega}(\omega^\prime)\Q(\omega;\d\omega^\prime)\P(\d\omega), \; A \in \cF,
			\end{align*}
			belongs to $\fP$.
		\end{enumerate}
	\end{assumption}
	
	\subsection{Main aggregation result}
	
	We now specify a filtration $\G = (\cG_t)_{t \in [0,\infty)}$, which will be used throughout the remainder of this work. For $t \in [0,\infty)$, let
	\[
	\cG_t \coloneqq \cF^\ast_t \lor \sN^\fP,
	\]
	where $\sN^\fP$ denotes the collection of all $A \subseteq \Omega$ that are $(\cF,\P)$--null sets for every $\P\in\fP$, and $\cF^\ast_t$ denotes the universal completion of $\cF_t$; this means that $\cF^\ast_t = \cap_{\P^\smalltext{\prime}}\cF_t(\P^\prime)$, where $\P^\prime$ ranges over all probability measures on $(\Omega,\cF_t)$ and $\cF_t(\P^\prime)$ denotes the $\P^\prime$-completion of $\cF_t$. Note that then $\G^\P_\smallertext{+} = \F^\P_+$, for any $\P\in\fP(\Omega)$, satisfies the usual conditions. 
	
	\medskip
	The first main result of this work is the existence of an aggregator for Snell envelopes in a nondominated setting. We write $\cT_{s,t}(\G_+)$ for the collection of all $\G_+$--stopping times $\tau$ satisfying $0 \leq s \leq \tau \leq t \leq \infty$.
	
	\begin{theorem}\label{thm_construction_capital_process}
		Suppose that $\fP$ is non-empty and that $(\fP(t,\omega))_{(\omega,t)\in\Omega\times[0,\infty)}$ satisfies \textnormal{\Cref{ass::measures}}. Let Let $T \in [0,\infty)$, and let $\xi = (\xi_t)_{t \in [0,T]}$ be a c\`adl\`ag, $\F_+$-adapted process satisfying
		\[
		\sup_{\P\in\fP} \sup_{\tau\in\cT_{\smalltext{0}\smalltext{,}\smalltext{T}}(\G_\smalltext{+})} \E^\P\big[|\xi_\tau|\big] < \infty,
		\]
		and such that for every $(t,\omega) \in [0,T]\times\Omega$, and then for every $\P\in\fP(t,\omega)$, the collection $\{\xi^{t,\omega}_{t+\tau} \,|\, \tau \in \cT_{0,T-t}(\G_+)\}$ is $\P$--uniformly integrable.
		Then there exists a, up to $\fP$--indistinguishability, unique real-valued, c\`adl\`ag, $\G_\smallertext{+}$-adapted process $Y = (Y_t)_{t \in [0,T]}$ satisfying
		\begin{equation}\label{eq_aggregator}
			Y_t = \underset{\bar{\P} \in \fP(\cG_{t\smalltext{+}},\P)}{{\esssup}^\P} \underset{\tau\in\cT_{\smalltext{t}\smalltext{,}\smalltext{T}}(\G_\smalltext{+})}{{\esssup}^{\bar\P}} \E^{\bar{\P}}[\xi_\tau|\cG_{t+}],
			\; \textnormal{$\P$--a.s.}, \; t \in [0,T], \; \textnormal{$\P \in \fP$,}
		\end{equation}
		where $\fP(\cG_{t+},\P) = \big\{\overline{\P}\in\fP \,\big|\, \textnormal{$\overline{\P} = \P$ on $\cG_{t+}$}\big\}$.	In particular, $Y$ is a $(\G^\P_+,\P)$--super-martingale for every $\P\in\fP$ satisfying
		\begin{equation}\label{eq_Y_0_equaly_xi}
			\sup_{\P\in\fP}\E^\P[Y_0] = \sup_{\P\in\fP} \sup_{\tau\in\cT_{\smalltext{0}\smalltext{,}\smalltext{T}}(\G_\smalltext{+})} \E^\P[\xi_\tau].
		\end{equation}
	\end{theorem}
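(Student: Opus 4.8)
The plan is to construct $Y$ by the usual pasting/measurable-selection machinery, working first pathwise on the Skorokhod space and then transferring to the nondominated family. The central object is the function
\[
\mathcal{E}(t,\omega) \coloneqq \sup_{\P' \in \fP(t,\omega)} \;\sup_{\tau \in \cT_{0,T-t}(\G_+)} \E^{\P'}\big[\xi^{t,\omega}_{t+\tau}\big], \quad (t,\omega) \in [0,T]\times\Omega,
\]
and one first shows this is well defined (finite), using the uniform bound on $\E^\P[|\xi_\tau|]$ together with \Cref{ass::measures}$(ii)$--$(iii)$, which force $\fP(t,\omega)$ to consist of r.c.p.d.'s of measures in $\fP$; here the $\omega_{\cdot\land t} = \omega$ invariance of $\fP(t,\omega)$ and Galmarino's test are used to see $\mathcal{E}(t,\cdot)$ is $\cF_t$-measurable, indeed $\cF^\ast_t$-measurable after taking analytic-set selections. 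The first serious task is measurability: I would express the inner supremum over stopping times as a supremum over a suitable countable dense family (discretising stopping times by dyadic rationals and using right-continuity of $\xi$ and of the paths), so that $(t,\omega,\P') \mapsto \E^{\P'}[\xi^{t,\omega}_{t+\tau}]$ is Borel for each fixed discretised $\tau$; then \Cref{ass::measures}$(i)$ (analyticity of the graph of $\fP(t,\cdot)$) combined with the measurable selection / projection theorems from \cite{bertsekas1978stochastic} yields that $\mathcal{E}(t,\cdot)$ is upper semi-analytic, hence $\cF^\ast_t$-measurable. This is the step I expect to be the main obstacle, precisely because, as the introduction stresses, the Snell envelope depends on the measure, so one cannot simply invoke \cite{nutz2013constructing} for a fixed integrand — one must first linearise the problem by pulling the $\tau$-supremum outside and handling it via a countable skeleton.

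The second step is the \emph{dynamic programming / tower} property: for $0 \le s \le t \le T$ one shows
\[
\mathcal{E}(s,\omega) = \sup_{\P' \in \fP(s,\omega)} \E^{\P'}\Big[\; \mathcal{E}\big(t,\,\omega\otimes_s X\big)\;\vee\; \text{(stop before $t$)} \;\Big],
\]
i.e. the Snell-envelope recursion: either exercise on $[s,t)$, or continue and inherit the aggregated value $\mathcal{E}(t,\cdot)$. One direction (``$\le$'') follows from splitting an arbitrary $\tau \in \cT_{s,T}$ at $t$ and using the r.c.p.d. factorisation of any $\P' \in \fP(s,\omega)$ into its restriction on $\cF_t$ and its kernel, together with \Cref{ass::measures}$(ii)$; the reverse direction (``$\ge$'') uses \Cref{ass::measures}$(iii)$ to paste a near-optimal kernel selection at time $t$ onto a near-optimal measure at time $s$, producing an admissible element of $\fP(s,\omega)$, and here measurable selection (in $\omega$) of the near-optimal kernels is again needed. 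Once this pathwise recursion holds, I would \emph{define} $Y_t(\omega) \coloneqq \mathcal{E}(t,\omega)$ and verify that for each fixed $\P \in \fP$ the process $t \mapsto Y_t$ is a version of the right-hand side of \eqref{eq_aggregator}: the outer $\esssup$ over $\fP(\cG_{t+},\P)$ is exactly the probabilistic shadow of the pointwise $\sup$ over $\fP(t,\omega)$ (this is the standard identification of conditional sublinear expectation with its pathwise aggregator, as in \cite{nutz2013constructing, nutz2015robust}), and the inner $\esssup^{\bar\P}$ over $\cT_{t,T}(\G_+)$ matches the pathwise $\tau$-supremum because $\G^\P_+ = \F^\P_+$ satisfies the usual conditions, so $\G_+$-stopping times reduce $\P$-a.s. to $\F_+$-stopping times.

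The remaining points are then routine. \emph{Càdlàg regularity}: the recursion shows $t \mapsto Y_t$ is a $\fP$-supermartingale in the raw sense, and its right-continuous modification (along dyadic $t$, using that $\sup_{\P\in\fP}\E^\P[\xi_\tau]$ is bounded so no explosion occurs, plus uniform integrability from the hypothesis) gives the càdlàg $\G_+$-adapted process; the $(D)$/uniform-integrability assumption on the families $\{\xi^{t,\omega}_{t+\tau}\}$ is what makes the Snell-envelope right-limits exist and the supermartingale property pass to the modification. \emph{Uniqueness up to $\fP$-indistinguishability} follows because any two càdlàg processes satisfying \eqref{eq_aggregator} agree $\P$-a.s. at each rational $t$ for every $\P \in \fP$, hence are $\P$-indistinguishable for each $\P$, hence $\fP$-indistinguishable. \emph{The supermartingale property} is immediate from the recursion with the kernel chosen to be the r.c.p.d.\ of $\P$ itself (\Cref{ass::measures}$(ii)$), and finally \eqref{eq_Y_0_equaly_xi} is just \eqref{eq_aggregator} evaluated at $t=0$, where $\cG_{0+}$ is $\fP$-trivial, together with taking $\sup_{\P\in\fP}\E^\P[\cdot]$ on both sides and noting $\fP(\cG_{0+},\P) = \fP$.
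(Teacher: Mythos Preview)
Your overall architecture matches the paper's closely: define the pointwise value function, prove it is upper semi-analytic, identify it with the $\esssup$ representation, show it is a $\fP$-supermartingale, and pass to the right-continuous modification. The substantive divergence, and the place where your plan has a genuine gap, is the measurability step.

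You propose to handle $(\omega,\P')\mapsto \sup_{\tau}\E^{\P'}[\xi^{t,\omega}_{t+\tau}]$ by ``expressing the inner supremum over stopping times as a supremum over a suitable countable dense family (discretising stopping times by dyadic rationals)'', so that for each fixed $\tau$ the map $(\omega,\P')\mapsto\E^{\P'}[\xi^{t,\omega}_{t+\tau}]$ is Borel. This does not work as stated: the family of $\G_+$-stopping times taking values in a fixed dyadic grid is still uncountable (each such $\tau$ is specified by a sequence of sets in the filtration), so discretising the \emph{range} of $\tau$ does not reduce you to a countable supremum. There is no canonical countable skeleton of stopping times that realises the Snell value simultaneously for all $\P'$; this is exactly the obstruction the introduction flags, namely that one is integrating an object which itself depends on $\P'$. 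The paper circumvents this by an entirely different route: it writes the Snell envelope $Y^{\P'}(T-t,\xi^{t,\omega}_{t+\cdot})$ as the first component of a reflected BSDE with zero generator, approximates the reflected BSDE by a monotone sequence of penalised BSDEs, truncates those to bounded data, regularises the generator to be globally Lipschitz, and then builds each Lipschitz BSDE by Picard iteration. Each Picard step is a conditional expectation of a fixed Borel function and is therefore jointly Borel in $(\omega,\P')$ by the Neufeld--Nutz measurability lemmas; four nested $\limsup$'s then give a Borel map $\Y(\omega,\P')$ that agrees with $\E^{\P'}[Y^{\P'}_0]$ on the relevant set. A variant of your idea that \emph{would} work is to replace the countable-family claim by a discrete-time backward induction over each dyadic grid (so that each approximant is built from finitely many conditional expectations and maxima, jointly Borel in $(\omega,\P')$), and then take the limit over refinements; but that is not what you wrote, and it still requires the Neufeld--Nutz-type input on joint measurability of conditional expectations.

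Two smaller points. First, your ``tower property'' mixes the option to stop before $t$ with a continuation value; the paper avoids this hybrid recursion by working instead with $\E^{\bar\P}[Y^{\bar\P}_t(T,\xi)\,|\,\cF_t]$ and proving a conditioning lemma that identifies this with $\E^{\bar\P^{t,\omega}}[Y^{\bar\P^{t,\omega}}_0(T-t,\xi^{t,\omega}_{t+\cdot})]$, which plugs directly into the definition of $\cY_t$. Second, your final claim that $\fP(\cG_{0+},\P)=\fP$ is false in general, since $\cG_{0+}$ need not be $\fP$-trivial; the paper obtains \eqref{eq_Y_0_equaly_xi} by sandwiching $\sup_{\P}\E^\P[Y_0]$ between $\sup_{\P,\tau}\E^\P[\xi_\tau]$ and the constant $\cY_0$, not by collapsing $\fP(\cG_{0+},\P)$.
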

	
	We postpone the proof of this aggregation result to \Cref{sec_proof_aggregation} and proceed directly to its application to the pricing and hedging of American-style options in the following section, after the following important comments.
	
	\begin{remark}
		$(i)$ The process $Y$ constructed in \textnormal{\Cref{thm_construction_capital_process}} can be viewed as the Snell envelope of $\xi$ relative to a sublinear expectation, meaning that it is the smallest nonlinear super-martingale that dominates $\xi$. Since $\overline{\P} = \P$ on $\cG_{t+}$ in \eqref{eq_aggregator}, we find
		\[
		Y_t = \underset{\bar{\P} \in \fP(\cG_{t\smalltext{+}},\P)}{{\esssup}^\P} \underset{\tau\in\cT_{\smalltext{t}\smalltext{,}\smalltext{T}}(\G_\smalltext{+})}{{\esssup}^{\P}} \E^{\bar{\P}}[\xi_\tau|\cG_{t+}] 
		= \underset{\tau\in\cT_{\smalltext{t}\smalltext{,}\smalltext{T}}(\G_\smalltext{+})}{{\esssup}^{\P}}  \underset{\bar{\P} \in \fP(\cG_{t\smalltext{+}},\P)}{{\esssup}^\P} \E^{\bar{\P}}[\xi_\tau|\cG_{t+}], \; \textnormal{$\P$--a.s.}, \; t \in [0,T], \; \P\in\fP.
		\]
		It follows from \textnormal{\cite[Theorem~IV.59]{dellacherie1978probabilities}} that for every $\P \in \fP$ and $\tau \in \cT_{t,T}(\G_+)$, there exists $\tau^\prime \in \cT_{t,T}(\F_+)$ such that $\tau = \tau^\prime$, \textnormal{$\P$--a.s.}, and hence we obtain the representation
		\[
		Y_t = \underset{\tau\in\cT_{\smalltext{t}\smalltext{,}\smalltext{T}}(\F_\smalltext{+})}{{\esssup}^{\P}}  \cE_{t+}(\xi_\tau), \; \textnormal{$\P$--a.s.},  \; t \in [0,T], \; \P\in\fP.
		\]
		Here, $\cE_{t+}(\xi_\tau)$ denotes the right-limit of the conditional sublinear expectations $(\cE_s(\xi_\tau))_{s \in [0,T]}$ of $\xi_\tau$ relative to $\fP$, which satisfy
		\[
		\cE_s(\xi_\tau) = \underset{\bar{\P} \in \fP(\cF_{s},\P)}{{\esssup}^\P} \E^{\bar{\P}}[\xi_\tau|\cF_s], \; \textnormal{$\P$--a.s.},  \; s \in [0,T], \; \P\in\fP.
		\]
		We refer to \textnormal{\cite{neufeld2013superreplication,nutz2013constructing}} and \textnormal{\cite[Section~4.1]{nutz2012superhedging}} for background. The function $Y_\tau$, for $\tau \in \cT_{0,T}(\F_+)$, is not necessarily upper semi-analytic and thus might lie outside the domain of $\cE_t(\,\cdot\,)$, and hence of $\cE_{t+}(\,\cdot\,)$. However, as shown in the proof of \textnormal{\Cref{thm_construction_capital_process}}, we have $Y = Y^0$ outside a $\fP$--polar set, where $Y^0$ is defined as the right-hand limit superior (along a dense subset) of a collection of upper semi-analytic functions $(\cY_s)_{s \in [0,T]}$ that form an $(\F^\ast,\fP)$--super-martingale. It is then straightforward to verify that $Y^0_\tau$ is upper semi-analytic, using the Borel-measurability of $\tau$ and arguments similar to those in the proof of \textnormal{\cite[Lemma~7.30]{bertsekas1978stochastic}}. We may therefore define $\cE_t(Y_\tau) \coloneqq \cE_t(Y^0_\tau)$$;$ in this way, the conditional sublinear expectations of $Y_\tau$ are then uniquely defined outside $\fP$--polar sets. It follows that
		\begin{equation}\label{eq_E_supermartingale}
			\cE_{t+}(Y_\tau) \leq Y_t, \; \textnormal{$\fP$--q.s.}, \; \tau \in \cT_{t,T}(\F_+), \; t \in [0,T].
		\end{equation}
		By the monotonicity of conditional sublinear expectations, we have that for any other c\`adl\`ag, $\G_+$--adapted process $\widehat{Y}$ satisfying \textnormal{\eqref{eq_E_supermartingale}} and $\widehat{Y} \geq \xi$, \textnormal{$\fP$--q.s.}, and for which each $\widehat{Y}_\tau$ admits an upper semi-analytic $\fP$--modification, it must hold that
		\[
		\cE_{t+}(\xi_\tau) \leq \cE_{t+}(\widehat{Y}_\tau) \leq \widehat{Y}_t, \; \textnormal{$\fP$--q.s.}, \; \tau \in \cT_{t,T}(\F_+), \; t \in [0,T].
		\]
		Therefore, $Y \leq \widehat{Y}$ outside a $\fP$--polar set. Hence, $Y$ can be interpreted as the minimal c\`adl\`ag $\cE_+$--super-martingale that dominates $\xi$.
		
		\medskip
		$(ii)$ It is natural to ask whether the above aggregation result extends to irregular processes $\xi$ that are not c\`adl\`ag or are defined on a random and potentially unbounded time horizon, and whether this leads to a corresponding hedging duality result. We discuss this in \textnormal{\Cref{sec_conclusions}}.
	\end{remark}
	
	\section{Hedging duality and optional decomposition}\label{sec_duality}
	
	In this section, we apply the aggregation result for Snell envelopes from \Cref{thm_construction_capital_process} to derive a robust hedging duality in the context of American-style options, that is, options exercisable at any time. We work within a general semi-martingale framework of mathematical finance and refer to the monograph by \citeauthor*{delbaen2006mathematics}~\cite{delbaen2006mathematics} for background, as well as the original articles \cite{delbaen1994general,delbaen1998fundamental}.
	
	\medskip
	Let $S = (S_t)_{t \in [0,\infty)}$ be an $\R^d$-valued, c\`adl\`ag, $\G_+$-adapted process such that $S$ is a $(\G^\P_+,\P)$--semi-martingale for every $\P\in\fP$. Here, $S$ should be understood as playing the role of the discounted price process of $d$ risky assets in a financial market that also includes a bank account whose discounted price process is constantly equal to $1$. We denote by $\L(S,\fP)$ the collection of $\R^d$-valued, $\G$-predictable processes $Z = (Z_t)_{t \in [0,\infty)}$ that are $S$-integrable relative to every $\P\in\fP$ in the sense of vector stochastic integration; see \cite[Section III.6]{jacod2003limit}. For clarity, we regard both the integrand $Z$ and the integrator $S$ as column vectors. We denote the corresponding stochastic integral simply by $Z\bcdot S$, even though the integral implicitly depends on $\P$, except in the case of equivalent measures. A probability measure $\P\in\fP(\Omega)$ is a $\sigma$-martingale measure for $S$, if $S$ is a $(\G^\P_+,\P)$--semi-martingale and for each component $S^i$, there exists a non-decreasing sequence of $\G^\P_+$-predictable subsets $(\Sigma_n)_{n \in \N} \subseteq \Omega\times[0,\infty)$ whose union is $\Omega\times[0,\infty)$ such that $\1_{\Sigma_\smalltext{n}}$ is $S$-integrable and $\1_{\Sigma_\smalltext{n}}\bcdot S^i$ is a $(\G^\P_+,\P)$--uniformly integrable martingale. We refer to \cite[Theorem III.6.41]{jacod2003limit} for other characterisations of $\sigma$-martingales. In the following, we denote the $(\G^\P_+,\P)$--semi-martingale characteristics by $(\mathsf{B}^\P,\mathsf{C}^\P,\nu^\P)$; see \cite[Section II.2]{jacod2003limit}, but also \citeauthor{neufeld2014measurability} \cite{neufeld2014measurability}. 
	
	\medskip
	Instead of working with $\G^\P_+$ above, we could have also simply described all the notions relative to $\G_+$, since $S$ is a semi-martingale with respect to $(\G_+,\P)$ if and only if it is one with respect to $(\G^\P_+,\P)$; see \cite[Proposition~2.2]{neufeld2014measurability}. Moreover, the characteristics also coincide outside a $\P$--null set. Similarly, $S$ is a $\sigma$-martingale relative to $(\G_+,\P)$ if and only if it is one relative to $(\G^\P_+,\P)$; for example, one can apply what is discussed right after \cite[Theorem~IV.78]{dellacherie1978probabilities}. Therefore, switching between $\G^\P_+$ and $\G_+$ in this context is harmless.
	
	\subsection{Dominating diffusion and saturation}
	
	\medskip
	In the following, $\mu \ll \nu$ denotes that the measure $\d\mu$ is absolutely continuous with respect to the measure $\d\nu$, and $\mu \approx \nu$ means that $\mu \ll \nu \ll \mu$. Furthermore, $\textnormal{Tr}(\mathsf{C})$ denotes the trace of a square matrix $\mathsf{C}$, and $\S^d_+$ denotes the space of real positive semi-definite $d \times d$ matrices.
	
	\begin{lemma}\label{prop::dominating_diffusion_equivalences}
		Let $\P\in \fP(\Omega)$ be such that $S$ is a $(\G^\P_+,\P)$--semi-martingale, and let $(\mathsf{B}^\P,\mathsf{C}^\P,\nu^\P)$ be the corresponding characteristics. Let $\mathsf{c}^\P$ be the \textnormal{$\P\otimes\mathrm{d}\textnormal{Tr}(\mathsf{C}^\P)$--a.e.} uniquely defined $\S^{d}_\smallertext{+}$-valued process such that $\mathsf{C}^\P = \mathsf{c}^\P \bcdot \textnormal{Tr}(\mathsf{C}^\P)$, \textnormal{$\P$--a.s.}, which is a factorisation of the second characteristic. Then the conditions
		\begin{enumerate}
			\item[$(i)$] $(|x|^2 \land 1)\ast \nu^\P \ll \1_{\{\textnormal{det}(\mathsf{c}^\smalltext{\P})>0\}}\bcdot\textnormal{Tr}(\mathsf{C}^\P)$, \textnormal{$\P$--a.s.},
			\item[$(ii)$] $(|x|^2 \land 1)\ast \nu^\P \ll \1_{\{\textnormal{det}(\mathsf{c}^\smalltext{\P}) \neq 0\}}\bcdot\textnormal{Tr}(\mathsf{C}^\P)$, \textnormal{$\P$--a.s.},
			\item[$(iii)$] $(|x|^2 \land 1)\ast \nu^\P \ll \1_{\{|\textnormal{det}(\mathsf{c}^\smalltext{\P})|>0\}}\bcdot\textnormal{Tr}(\mathsf{C}^\P)$, \textnormal{$\P$--a.s.},
			\item[$(iv)$] $(|x|^2 \land 1)\ast \nu^\P \ll \textnormal{det}(\mathsf{c}^\smalltext{\P})\bcdot\textnormal{Tr}(\mathsf{C}^\P)$, \textnormal{$\P$--a.s.},
			\item[$(v)$] $(|x|^2 \land 1)\ast \nu^\P \ll |\textnormal{det}(\mathsf{c}^\P)|\bcdot\textnormal{Tr}(\mathsf{C}^\P)$, \textnormal{$\P$--a.s.},
		\end{enumerate}
		are equivalent.
	\end{lemma}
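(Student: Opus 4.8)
The plan is to exploit the fact that $\mathsf{c}^\P$ takes values in $\S^d_+$, so that its determinant is automatically nonnegative; this collapses several of the five expressions into one another and reduces the lemma to a single, soft absolute-continuity statement.

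\medskip
\textbf{Step 1 (trivial reductions).} Since $\mathsf{c}^\P \in \S^d_+$ holds $\P\otimes\mathrm{d}\mathrm{Tr}(\mathsf{C}^\P)$--a.e., all eigenvalues of $\mathsf{c}^\P$ are nonnegative, hence $\det(\mathsf{c}^\P)\geq 0$ outside a $\P\otimes\mathrm{d}\mathrm{Tr}(\mathsf{C}^\P)$--null set. Therefore $|\det(\mathsf{c}^\P)| = \det(\mathsf{c}^\P)$ and $\{\det(\mathsf{c}^\P)>0\} = \{\det(\mathsf{c}^\P)\neq 0\} = \{|\det(\mathsf{c}^\P)|>0\}$ up to such a null set. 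Reading all of $(i)$--$(v)$ modulo this null set (which is harmless, as each condition is itself only a $\P$--a.s.\ statement), we see that $(i)$, $(ii)$, $(iii)$ are the same assertion, and $(iv)$, $(v)$ are the same assertion. It thus suffices to prove $(i)\Leftrightarrow(iv)$.

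\medskip
\textbf{Step 2 (the two dominating measures are equivalent).} Write $A \coloneqq \mathrm{Tr}(\mathsf{C}^\P)$, a nonnegative nondecreasing process, and consider, $\P$--a.s.\ and pathwise on $[0,\infty)$, the measures
\[
\mu_1 \coloneqq \1_{\{\det(\mathsf{c}^\P)>0\}}\bcdot\mathrm{Tr}(\mathsf{C}^\P) \quad\text{and}\quad \mu_2 \coloneqq \det(\mathsf{c}^\P)\bcdot\mathrm{Tr}(\mathsf{C}^\P).
\]
I claim $\mu_1\approx\mu_2$. Indeed, for any Borel $D\subseteq[0,\infty)$ the integrand $\det(\mathsf{c}^\P)$ vanishes off $\{\det(\mathsf{c}^\P)>0\}$, so $\mu_2(D) = \int_{D\cap\{\det(\mathsf{c}^\P)>0\}}\det(\mathsf{c}^\P)\,\mathrm{d}A$, which is zero if and only if $\det(\mathsf{c}^\P)=0$ $\mathrm{d}A$--a.e.\ on $D$, i.e.\ if and only if $A\big(D\cap\{\det(\mathsf{c}^\P)>0\}\big)=0$, i.e.\ if and only if $\mu_1(D)=0$. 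Hence $\mu_1$ and $\mu_2$ have exactly the same null sets. (One does not even need the bound $\det(\mathsf{c}^\P)\leq 1$, which anyway holds since $\mathrm{Tr}(\mathsf{c}^\P)=1$ $\mathrm{d}A$--a.e.\ forces all eigenvalues of $\mathsf{c}^\P$ into $[0,1]$.)

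\medskip
\textbf{Step 3 (conclusion).} Absolute continuity is transitive and compatible with replacing the dominating measure by an equivalent one. Applying this with the nonnegative measure $\mu \coloneqq (|x|^2\wedge 1)\ast\nu^\P$ and Step~2 yields $\mu\ll\mu_1\Leftrightarrow\mu\ll\mu_2$, which is precisely $(i)\Leftrightarrow(iv)$; together with Step~1 this establishes the equivalence of $(i)$--$(v)$. I do not expect a genuine obstacle: the only point needing a little care is that $\mathsf{c}^\P$ and the identity $\mathsf{C}^\P=\mathsf{c}^\P\bcdot\mathrm{Tr}(\mathsf{C}^\P)$ are only defined $\P\otimes\mathrm{d}\mathrm{Tr}(\mathsf{C}^\P)$--a.e., so every pathwise statement above should be read modulo the associated $\P$--null set, which, as noted, does not affect any of the a.s.\ assertions in the lemma.
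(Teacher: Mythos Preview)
Your proof is correct and follows the same idea as the paper's: the paper simply notes that the determinant of any matrix in $\S^d_+$ is nonnegative and declares the equivalences immediate. Your Steps~1--3 merely spell out what ``immediate'' means, in particular the equivalence $\mu_1\approx\mu_2$ needed for $(i)\Leftrightarrow(iv)$, so there is no substantive difference in approach.
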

	
	\begin{proof}
		Since the determinant of any matrix in $\S^d_+$ is non-negative, the equivalences follow immediately.
	\end{proof}
	
	The following notion turns out to be a useful technical condition in deriving a robust version of Kramkov's optional decomposition result.
	
	\begin{definition}\label{def_dominating_diffusion}
		Let $\P\in\fP(\Omega)$ be such that $S$ is a $(\G^\P_+,\P)$--semi-martingale. If any of the equivalent conditions in \textnormal{\Cref{prop::dominating_diffusion_equivalences}} holds for $\P$, then $S$ is said to have dominating diffusion relative to $\P$.
	\end{definition}
	
	\begin{remark}
		In the one-dimensional case, the notion of dominating diffusion introduced in \textnormal{\Cref{def_dominating_diffusion}} coincides with the original definition given in \textnormal{\cite[Definition 2.2]{nutz2015robust}}, where the term was first introduced. In the multi-dimensional setting, however, our notion is slightly stronger. Upon careful examination of the proof of the robust optional decomposition \textnormal{\cite[Theorem 2.4]{nutz2015robust}}, it appears that a stronger condition than that provided by \textnormal{\cite[Definition 2.2]{nutz2015robust}} is required at a particular step to complete the argument. For completeness, we include a proof that closely follows the structure of the proof of \textnormal{\cite[Theorem 2.4]{nutz2015robust}}, with a minor but essential modification to the relevant step, and makes use of \textnormal{\Cref{def_dominating_diffusion}}. We then present a counterexample to demonstrate the necessity of the stronger condition.
	\end{remark}
	
	Before stating the hedging duality and the decomposition result, we introduce the following notion of saturation from \cite[Section 2]{nutz2015robust}.
	
	\begin{definition}
		The collection $\fP$ is said to be saturated for $S$ if, for any $\P' \in \fP(\Omega)$ that is a $\sigma$-martingale measure for $S$ and satisfies $\P' \approx \P$ for some $\P \in \fP$, we have $\P' \in \fP$.
	\end{definition}
	
	\subsection{Hedging duality for American options}
	
	The next result establishes the main hedging duality and the existence of a minimal hedging strategy for American-style options. We denote by $\L_\textnormal{adm}(S,\fP)$ the collection of integrands $Z \in \L(S,\fP)$ for which $Z \bcdot S$ is a $(\G^\P_+,\P)$--super-martingale for each $\P \in \fP$. The subscript `$\textnormal{adm}$' alludes to the admissibility of integrands in mathematical finance; see, for example, \cite[Chapter~8]{delbaen2006mathematics}.
	
	\begin{theorem}\label{thm_duality}
		Suppose that $(\fP(t,\omega))_{(\omega,t)\in\Omega\times[0,\infty)}$ satisfies \textnormal{\Cref{ass::measures}}, that $\fP$ is a non-empty and saturated set of $\sigma$-martingale measures for $S = (S_t)_{t \in [0,\infty)}$, and that $S$ satisfies the dominating diffusion property relative to every $\P\in\fP$. Let $T \in [0,\infty)$, and let $\xi = (\xi_t)_{t \in [0,T]}$ be a c\`adl\`ag, $\F_+$--adapted process satisfying
		\[
		\sup_{\P\in\fP} \sup_{\tau \in \cT_{\smalltext{0}\smalltext{,}\smalltext{T}}(\G_\smalltext{+})} \E^\P\big[|\xi_\tau|\big] < \infty,
		\]
		and such that for every $(t,\omega) \in [0,T]\times\Omega$, and then for every $\P\in\fP(t,\omega)$, the collection $\{\xi^{t,\omega}_{t+\tau} | \tau \in \cT_{0,T-t}(\G_+)\}$ is $\P$--uniformly integrable.	Then
		\begin{equation}\label{eq_duality}
			\sup_{\P\in\fP} \sup_{\tau \in \cT_{\smalltext{0}\smalltext{,}\smalltext{T}}(\G_\smalltext{+})} \E^\P[\xi_\tau] = \inf\big\{ y \in \R \,\big|\, y + Z \bcdot S \geq \xi, \; \textnormal{$\P$--a.s., for all $\P\in\fP$},\; \textnormal{for some $Z \in \L_\textnormal{adm}(S,\fP)$} \big\}.
		\end{equation}
		Moreover, the infimum on the right-hand side is attained.
	\end{theorem}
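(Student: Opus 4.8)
The plan is to identify the left-hand side of \eqref{eq_duality} with $\sup_{\P\in\fP}\E^\P[Y_0]$, where $Y$ is the aggregated Snell envelope produced by \Cref{thm_construction_capital_process}, and then to apply a robust optional decomposition to $Y$ in order to extract the hedging strategy $Z$. Concretely, the hypotheses here are precisely those of \Cref{thm_construction_capital_process}, so we obtain a c\`adl\`ag, $\G_+$-adapted process $Y$ satisfying \eqref{eq_aggregator} which is a $(\G^\P_+,\P)$--super-martingale for every $\P\in\fP$, and with $\sup_{\P\in\fP}\E^\P[Y_0]=\sup_{\P\in\fP}\sup_{\tau}\E^\P[\xi_\tau]$ by \eqref{eq_Y_0_equaly_xi}. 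The two inequalities in \eqref{eq_duality} are then treated separately.

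For the inequality ``$\le$'' (the dual bound is a lower bound for the superhedging price), suppose $y+Z\bcdot S\ge\xi$ $\P$--a.s.\ for all $\P\in\fP$ with $Z\in\L_{\mathrm{adm}}(S,\fP)$. Fix $\P\in\fP$ and $\tau\in\cT_{0,T}(\G_+)$. Since $Z\bcdot S$ is a $(\G^\P_+,\P)$--super-martingale null at $0$, optional sampling gives $\E^\P[(Z\bcdot S)_\tau]\le 0$, whence $\E^\P[\xi_\tau]\le y+\E^\P[(Z\bcdot S)_\tau]\le y$; taking suprema over $\tau$ and $\P$ yields ``$\le$''. (One should double-check that optional stopping for super-martingales applies to bounded $\G^\P_+$-stopping times on $[0,T]$, which is standard once $\G^\P_+$ satisfies the usual conditions, as noted after the definition of $\G$.)

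For the reverse inequality ``$\ge$'' and the attainment claim, set $y^\ast\coloneqq\sup_{\P\in\fP}\E^\P[Y_0]$; by the super-martingale property, $Y_0$ is $\P$--a.s.\ constant-bounded and one argues as in \cite{nutz2015robust} that $\E^\P[Y_0]$ is in fact the same for all $\P\in\fP$ (using that $\cG_{0+}$ is $\fP$-trivial up to null sets, so $Y_0$ is $\fP$-q.s.\ a constant equal to $y^\ast$), hence $Y_0=y^\ast$, $\fP$--q.s. Now $Y-y^\ast$ is a c\`adl\`ag $(\G^\P_+,\P)$--super-martingale for every $\P\in\fP$, null at $0$, $\fP$--q.s. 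Apply the robust optional decomposition theorem (the version stated and reproved in \Cref{sec_duality} under the dominating-diffusion property of \Cref{def_dominating_diffusion} and saturation of $\fP$): there exists $Z\in\L(S,\fP)$ such that $Z\bcdot S$ is a $(\G^\P_+,\P)$--super-martingale for each $\P\in\fP$ — hence $Z\in\L_{\mathrm{adm}}(S,\fP)$ — and $Y-y^\ast\le Z\bcdot S$, $\P$--a.s., for all $\P\in\fP$. Combining with $Y\ge\xi$ $\fP$--q.s.\ (which holds because each $\xi_t$ appears as the choice $\tau=t$ in the essential supremum \eqref{eq_aggregator}) gives $y^\ast+Z\bcdot S\ge Y\ge\xi$ $\P$--a.s.\ for all $\P\in\fP$. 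Thus $y^\ast$ is an admissible superhedging capital, so the infimum on the right of \eqref{eq_duality} is $\le y^\ast$, and it is attained by $(y^\ast,Z)$.

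The main obstacle is the invocation of the robust optional decomposition in the multi-dimensional vector-stochastic-integral setting: one must verify that $Y-y^\ast$, which a priori is only a super-martingale under each $\P\in\fP$ separately and is merely $\G_+$-adapted (not $\F_+$-adapted or $\omega$-regular), falls within the scope of the decomposition theorem as reproved in this paper, and in particular that the saturation of $\fP$ and the \emph{stronger} dominating-diffusion condition of \Cref{def_dominating_diffusion} are exactly what is needed to produce a single $\G$-predictable $Z$ that works simultaneously for all $\P\in\fP$. A secondary technical point is the passage from ``$\E^\P[Y_0]$ the same for all $\P$'' to ``$Y_0$ a q.s.\ constant'': this uses that $\cG_{0+}$ contains only $\fP$-polar sets and their complements, so that the $\cG_{0+}$-measurable random variable $Y_0$ is q.s.\ constant; care is needed because $Y$ is only $\G_+$-adapted, but $\cG_{0+}=\cF^\ast_{0+}\lor\sN^\fP$ and $\cF^\ast_{0+}$ is $\fP$-trivial by Galmarino's test together with $X_0=0$, so this goes through. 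Everything else — optional sampling, monotonicity, the reduction of stopping times in $\cT(\G_+)$ to stopping times in $\cT(\F_+)$ up to $\P$-null sets — is routine.
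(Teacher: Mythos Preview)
Your argument for ``$\le$'' matches the paper's. The gap is in ``$\ge$'', specifically your claim that $Y_0$ is $\fP$--q.s.\ constant because $\cG_{0+}$ is $\fP$-trivial. This is not true in general: Galmarino's test characterises $\cF_t$, not $\cF_{t+}$, and $\cF_{0+}$ on the Skorokhod space is genuinely larger than $\{\varnothing,\Omega\}$ (e.g.\ it contains sets describing the $\limsup$ behaviour of $X_t/t$ as $t\downarrow 0$). Whether such sets are $\P$-null depends on a Blumenthal-type $0$--$1$ law, which is not assumed here; $\fP$ is an arbitrary saturated family of $\sigma$-martingale measures. So you cannot conclude that the $\cG_{0+}$-measurable random variable $Y_0$ is q.s.\ deterministic.

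The paper handles this differently, and this is where saturation enters a \emph{second} time (not only in the optional decomposition). One shows that for every $\P\in\fP$,
\[
Y_0 \;\le\; \sup\big\{\E^\Q[Y_0]\,:\,\Q\in\fP(\Omega,\cG_{0+}),\ \Q\approx\P|_{\cG_{0+}}\big\}\;\le\;\cY_0,\quad\text{$\P$--a.s.}
\]
For the right inequality, given $\Q\approx\P|_{\cG_{0+}}$, the measure $\d\overline\P\coloneqq(\d\Q/\d\P|_{\cG_{0+}})\,\d\P$ has the same characteristics as $\P$ by Girsanov, is therefore a $\sigma$-martingale measure for $S$, and lies in $\fP$ by saturation; hence $\E^\Q[Y_0]=\E^{\overline\P}[Y_0]\le\sup_{\P'\in\fP}\E^{\P'}[Y_0]=\cY_0$. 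The left inequality is a short contradiction argument: if $\P[\{Y_0>\cY_0\}]>0$, one constructs $\Q_\varepsilon\approx\P|_{\cG_{0+}}$ with $\E^{\Q_\varepsilon}[Y_0]>\cY_0$. This yields $Y_0\le\cY_0$ $\P$--a.s.\ without any triviality assumption on $\cG_{0+}$.

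A second, smaller gap: the optional decomposition (\Cref{thm_optional_decomposition}) only produces $Z\in\L(S,\fP)$ with $Y-Y_0-Z\bcdot S$ non-increasing; it does not assert that $Z\bcdot S$ is a super-martingale. Admissibility is argued separately: from $\xi_t\le Y_t\le Y_0+Z\bcdot S_t$ and $Y_0\le\cY_0$ one gets $Z\bcdot S_t\ge\E^\P[\xi_T|\cG_{t+}]-\cY_0$, a martingale lower bound, and then \cite[Theorem~14.5.13]{delbaen2006mathematics} (an Ansel--Stricker-type result) gives that $Z\bcdot S$ is a $\P$--super-martingale for each $\P\in\fP$.
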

	
	We postpone the proof of \Cref{thm_duality} to \Cref{sec_proof_duality}, as it relies on a robust optional decomposition introduced in \Cref{sec_robust_decomposition}.
	
	\begin{remark}
		$(i)$ If $\xi$ were bounded from below, then we could replace the set $\L_\textnormal{adm}(S,\fP)$ with $\L(S,\fP)$ in \textnormal{\eqref{eq_duality}}, as $Z \bcdot S$ would then be bounded from below and hence a $\P$--super-martingale by a consequence of a result by Ansel--Stricker $($see \textnormal{\cite[Corollary 7.3.8]{delbaen2006mathematics}}$)$, since $S$ is a $\P$--$\sigma$-martingale.
		
		\medskip
		$(ii)$ The assumptions on the family $(\fP(t,\omega))_{(\omega,t) \in \Omega \times [0,\infty)}$ ensure that we can construct the minimal capital process $Y$ corresponding to the American-style option $\xi$ with \textnormal{\Cref{thm_construction_capital_process}}. Then, $Y$ will be a $\fP$--super-martingale, and the assumptions on $\fP$ allow us to apply the robust optional decomposition from \textnormal{\Cref{thm_optional_decomposition}}.
	\end{remark}
	
	\subsection{Examples}
	
	We provide a brief overview of two example classes where our hedging duality is relevant: random $G$-expectation and nonlinear L\'evy processes.
	
	\subsubsection{Random \emph{G}-expectation}
	
	In the case of continuous processes, the above result naturally applies to the random $G$-expectation setting introduced in \textnormal{\cite{nutz2013random,nutz2013constructing}}. We outline \textnormal{\cite[Example 2.1]{neufeld2013superreplication}} for clarity and completeness. For each $(\omega,t) \in \Omega \times [0,\infty)$, let $\mathbf{D}_t(\omega) \subseteq \R^{d \times d}$. Suppose that for every $t \in [0,\infty)$,
	\[
	\{(\omega,s,A) \,|\, (\omega,s) \in \Omega \times [0,t], \; A \in \mathbf{D}_s(\omega)\} \in \cF_t \otimes \cB([0,t]) \otimes \cB(\R^{d\times d}).
	\]
	Then, let $\P_0$ be the Wiener measure on $(\Omega,\cF)$, and consider the class $\fP_S \subseteq \fP(\Omega)$ (see also \cite[Section 2]{soner2013dual}) consisting of all probability measures $\P^\alpha$ defined through
	\[
	\P^\alpha \coloneqq \P_0 \circ \bigg(\int_0^\cdot \alpha^{1/2}_s\d X_s\bigg)^{-1},
	\]
	for some $\F$-predictable process $\alpha: \Omega \times [0,\infty)\longrightarrow \S^d_{++}$, where $\S^d_{++}$ denotes the set of real positive definite $d \times d$ matrices, satisfying $\int_0^T\|\alpha^{1/2}_s\|^2 \, \d s < \infty$, $\P_0$--a.s., for all $T\in [0,\infty)$; here, $\|\cdot\|$ denotes the Euclidean norm in $\R^{d\times d}$. We then define $\fP(t,\omega)$ as the collection of all $\P \in \fP_S$ for which the second characteristic $\mathsf{C}^\P$ of $X$ satisfies
	\[
	\frac{\d\mathsf{C}^\P_s}{\d s}(\tilde{\omega}) \in \mathbf{D}_{t+s}(\omega\otimes_t\tilde{\omega}), \; \textnormal{$\P\otimes\d s$--a.e. $(\tilde{\omega},s)\in\Omega\times[0,\infty)$.}
	\]
	It then follows from \textnormal{\cite[Corollary 2.6]{neufeld2013superreplication}} that \textnormal{\Cref{ass::measures}} holds. The dominating diffusion property holds immediately because we are considering measures for which $X$ does not jump outside null sets. Moreover, since the measures in $\fP_S$ satisfy the martingale representation property relative to $X$ (see \textnormal{\cite[Lemma 4.1]{neufeld2013superreplication}}, \textnormal{\cite[Lemma 4.4]{nutz2012superhedging}}, or \textnormal{\cite[Lemma 8.2]{soner2011quasi}}), it follows from Girsanov's theorem \cite[Theorem III.3.24]{jacod2003limit} and from \textnormal{\cite[Theorem III.4.29]{jacod2003limit}} that any $\sigma$-martingale measure $\P\in\fP(\Omega)$ for $X$ with $\P\approx\P^\alpha$ for some $\P^\alpha \in\fP_S$ must satisfy $\P=\P^\alpha$. Thus, $\fP$ is saturated in this case.
	
	\subsubsection{Nonlinear L\'evy processes}
	A natural application of the above result in the case of models with jumps is to nonlinear L\'evy processes, as introduced by \textnormal{\citeauthor{neufeld2016nonlinear} \cite{neufeld2016nonlinear}}, or to nonlinear exponential L\'evy processes$;$ see \textnormal{\cite[Section 4]{nutz2015robust}}. We briefly describe the former class of processes here. Let $S = X$ be the canonical process on $\Omega$. Suppose that each $\fP(t,\omega) \equiv \fP_\Theta$, where $\fP_\Theta$ denotes the collection of all semi-martingale measures $\P$ on $\Omega$ for which the characteristics of $S = X$ are absolutely continuous with respect to Lebesgue measure, and the corresponding differential characteristics 
	\[
	(\mathsf{b}^\P_t, \mathsf{c}^\P_t, \mathsf{F}^\P_t(\d x)) \in \Theta, \; \textnormal{$\P\otimes \d t$--a.s.},
	\]
	where $\Theta \subseteq \R^d \times \S^d_+ \times \cL$ is a Borel-measurable subset. Here, $\cL$ denotes the space of all L\'evy measures\footnote{There exists a natural Borel $\sigma$-algebra on $\cL$$;$ see \cite[Section 2.1]{neufeld2014measurability}.} on $\R^d$, that is, the collection of all measures $F$ on $(\R^d, \cB(\R^d))$ satisfying $\int_{\R^d} (|x|^2 \land 1) F(\d x) < \infty$ and $F(\{0\}) = 0$. Then $\fP_\Theta$ is known to satisfy \textnormal{\Cref{ass::measures}}$;$ see \textnormal{\cite[Theorem 2.1.$(i)$]{neufeld2016nonlinear}}. 
	
	\medskip
	For conditions on $\Theta$ under which $\fP_\Theta$ additionally satisfies the property of being a saturated set of $\sigma$-martingale measures for $S = X$, and under which the dominating diffusion property holds in the sense of \textnormal{\Cref{def_dominating_diffusion}} for each measure in $\fP_\Theta$, we refer to \textnormal{\cite[Lemma 4.2]{nutz2015robust}}. Even though our notion of dominating diffusion in \textnormal{\Cref{def_dominating_diffusion}} is stronger than that of \textnormal{\cite[Definition 2.2]{nutz2015robust}}, the proof of \textnormal{\cite[Lemma 4.2]{nutz2015robust}} indeed requires our definition.
	
	\subsection{A robust optional decomposition}\label{sec_robust_decomposition}
	
	\medskip
	The proof of the hedging duality is based on the following robust optional decomposition, which appears as Theorem 2.4 in \cite{nutz2015robust}, with the only difference in our formulation being the use of the notion of dominating diffusion introduced in \Cref{def_dominating_diffusion}.
	
	\begin{theorem}\label{thm_optional_decomposition}
		Let $\fP$ be a non-empty and saturated set of $\sigma$-martingale measures for $S$ such that $S$ satisfies the dominating diffusion property in the sense of \textnormal{\Cref{def_dominating_diffusion}} for every $\P\in\fP$. Suppose that $Y = (Y_t)_{t \in [0,\infty)}$ is a real-valued, c\`adl\`ag, $\G_+$-adapted, $(\G^\P_+,\P)$--local super-martingale for every $\P\in\fP$. There exists a $\G$-predictable process $Z \in \L(S,\fP)$ such that $Y - Y_0 - Z \bcdot S$ is $\P$--a.s. non-increasing for each $\P\in\fP$. 
	\end{theorem}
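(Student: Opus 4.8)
The plan is to follow the proof of \cite[Theorem~2.4]{nutz2015robust}, reorganised so that the single point at which the strengthened dominating diffusion property of \Cref{def_dominating_diffusion} is invoked becomes transparent in the multi-dimensional case. The assertion that ``$Y-Y_0-Z\bcdot S$ is $\P$-a.s.\ non-increasing'' is exactly the optional decomposition $Y=Y_0+Z\bcdot S-A^\P$ with $A^\P$ adapted and non-decreasing. First I would treat each $\P\in\fP$ in isolation: since $\fP$ is saturated, every $\sigma$-martingale measure $\Q$ for $S$ with $\Q\approx\P$ belongs to $\fP$, so $Y$ is a $(\G^\Q_+,\Q)$--local super-martingale for all such $\Q$, and the optional decomposition theorem for $\sigma$-martingales (\cite{follmer1997optional,delbaen1999compactness}) provides a $\G$-predictable, $S$-integrable process $Z^\P$ with $Y-Y_0-Z^\P\bcdot S$ $\P$-a.s.\ non-increasing. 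Comparing continuous local martingale parts yields $Y^c=Z^\P\bcdot S^c$, equivalently that $\mathsf c^\P Z^\P$ is the $\mathrm{Tr}(\mathsf C^\P)$-density of $\langle Y^c,S^c\rangle$ (in particular the latter is $\mathrm{Tr}(\mathsf C^\P)$-absolutely continuous), $\P\otimes\d t$-a.e.; and, jumps of a non-increasing process being non-positive, $\Delta Y\le(Z^\P)^\top\Delta S$ at every jump, $\P$-a.s. The obstruction is that $Z^\P$ depends on $\P$ and, off $\{\det\mathsf c^\P>0\}$, is pinned down only modulo $\ker\mathsf c^\P$.

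Next I would build a single, $\P$-independent integrand from the realised characteristics. By the measurability of semi-martingale characteristics in the probability law (see \cite{neufeld2014measurability}), the second characteristic density of $S$ and the covariation density of $(S^c,Y^c)$ admit fixed $\G$-predictable versions $\widehat{\mathsf c}$ and $\widehat{\mathsf h}$ that coincide $\P\otimes\d t$-a.e.\ with $\mathsf c^\P$ and with the $\mathrm{Tr}(\mathsf C^\P)$-density of $\langle Y^c,S^c\rangle$, for every $\P$ at once. Set $Z:=\widehat{\mathsf c}^{\dagger}\widehat{\mathsf h}$, where $\widehat{\mathsf c}^{\dagger}$ is the Moore--Penrose pseudo-inverse, a Borel function of $\widehat{\mathsf c}$ coinciding with $\widehat{\mathsf c}^{-1}$ on $\{\det\widehat{\mathsf c}>0\}$; then $Z$ is $\G$-predictable and independent of $\P$. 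To verify it works, fix $\P$. On $\{\det\mathsf c^\P>0\}$ one has $\widehat{\mathsf c}=\mathsf c^\P$ and $\widehat{\mathsf h}=\mathsf c^\P Z^\P$, hence $Z=(\mathsf c^\P)^{-1}\widehat{\mathsf h}=Z^\P$. On the complement $\{\det\mathsf c^\P=0\}$, \Cref{def_dominating_diffusion} forces $\nu^\P$, and therefore the jumps of $S$, to vanish, while $\mathsf c^\P Z=\widehat{\mathsf h}=\mathsf c^\P Z^\P$ gives $Z-Z^\P\in\ker\mathsf c^\P$; consequently $(Z-Z^\P)\bcdot S$ restricted to that predictable set is a continuous local martingale with vanishing quadratic variation, hence null. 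Thus $Z\bcdot S=Z^\P\bcdot S$ $\P$-a.s., so $Y-Y_0-Z\bcdot S=Y-Y_0-Z^\P\bcdot S$ is $\P$-a.s.\ non-increasing and $Z\in\L(S,\fP)$.

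The main obstacle — and the only substantive deviation from \cite[Theorem~2.4]{nutz2015robust} — is that the identity $Z=Z^\P$ on the set carrying the jumps of $S$ requires \emph{full rank} of $\mathsf c^\P$ there, not merely $\mathsf c^\P\neq0$: it is precisely via $Z=Z^\P$ on that set that the inequality $\Delta Y\le Z^\top\Delta S$ along jumps is inherited. If jumps were permitted on $\{\mathsf c^\P\neq0,\ \det\mathsf c^\P=0\}$ — allowed by the weaker notion of \cite[Definition~2.2]{nutz2015robust} but ruled out by \Cref{def_dominating_diffusion} once $d\ge2$ — then the $\ker\mathsf c^\P$-indeterminacy of $Z^\P$ would leave the canonical choice $Z=\widehat{\mathsf c}^{\dagger}\widehat{\mathsf h}$ (which carries no kernel component) without the jump inequality, and the decomposition can break down; I would exhibit exactly such a situation in \Cref{exp_robust}. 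Modulo this point the argument is that of \cite[Theorem~2.4]{nutz2015robust}, which I would reproduce in full, flagging explicitly where the stronger hypothesis is used.
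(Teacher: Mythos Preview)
Your proposal is correct and follows essentially the same argument as the paper: both define the universal integrand as $Z = (\mathsf{c}^S)^\oplus \mathsf{c}^{SY}$ from measure-independent versions of the joint second characteristic of $(S,Y)$, apply the classical optional decomposition under each fixed $\P$ to obtain $Z^\P$, and then show $Z\bcdot S = Z^\P\bcdot S$ by invoking the strengthened dominating diffusion property precisely to guarantee $Z = Z^\P$ on the support of the jump characteristics. The only cosmetic difference is that you partition by the predictable sets $\{\det\widehat{\mathsf c}>0\}$ and its complement, whereas the paper splits $S$ into $S^{c,\P}$ and $S-S^{c,\P}$; the one point you leave implicit---the $S$-integrability of $Z$, which the paper dispatches via \cite[Theorem~III.6.30]{jacod2003limit}---is routine.
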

	
	\begin{remark}
		$(i)$ Instead of working with $\sigma$-martingale measures, we could also work with local martingale measures for $S$. Rather than referring to the optional decomposition of \textnormal{\cite[Theorem 5.1]{delbaen1999compactness}} in the proof of \textnormal{\Cref{thm_optional_decomposition}}, we would invoke \textnormal{\cite[Theorem 1]{follmer1997optional}}.
		
		\medskip
		$(ii)$ It is not relevant for the statement or the proof that we are working on the canonical space of c\`adl\`ag paths. The only important property we need is that the second characteristic of the joint process $(S,Y)$ can be defined independently of the probability law. As the proof of \textnormal{\cite[Proposition 6.6]{neufeld2014measurability}} shows, this can be done without any assumption on the underlying filtered probability space. Moreover, the collection of probability measures $\fP$ does not need to originate from an auxiliary family $(\fP(t,\omega))_{(\omega,t) \in \Omega \times [0,\infty)}$.
	\end{remark}
	
	\begin{proof}
		We adopt the structure and notation of the proof of \cite[Theorem 2.4]{nutz2015robust}, introducing a minor but essential adjustment at a specific step, where we rely on the stronger notion of dominating diffusion given in \Cref{def_dominating_diffusion}. We write $\mathsf{C}^{(S,Y)}$ for the $\G$-predictable process with values in $\S^{d+1}_+$, which coincides for every $\P\in\fP$ with the second characteristic of the joint process $(S,Y)$. We denote by $\mathsf{C}^S$ the submatrix process of $\mathsf{C}^{(S,Y)}$ that corresponds to the second characteristic of $S$. Moreover, we denote by $\mathsf{C}^{SY}$ the $\R^d$-valued process corresponding to the quadratic co-variation of $S$ and $Y$. Let $\mathsf{A} \coloneqq \textnormal{Tr}(\mathsf{C}^S)$, and then let $\d \mathsf{C}^S = \mathsf{c}^S \d\mathsf{A}$ and $\d \mathsf{C}^{SY} = \mathsf{c}^{SY}\d\mathsf{A}$ outside an $\fP$--polar set, for $\G$-predictable processes $\mathsf{c}^S$ and $\mathsf{c}^{SY}$ with values in $\S^d_+$ and $\R^d$, respectively. We argue that the process
		\[
		Z \coloneqq (\mathsf{c}^S)^\oplus \mathsf{c}^{SY},
		\]
		where $(\mathsf{c}^S)^{\oplus}$ denotes the Moore--penrose pseudo-inverse of $\mathsf{c}^S$, serves the intended purpose. We fix $\P\in\fP$, and apply the classical optional decomposition theorem \cite[Theorem 5.1]{delbaen1999compactness} with respect to $(\G^\P_+,\P)$ and obtain the existence of a $\G^\P_+$-predictable and $S$-integrable process $Z^\P$ such that
		\[
		Y - Y_0 - Z^\P\bcdot S
		\]
		is $\P$--a.s. non-increasing. The continuous local martingale part of $Y$ relative to $\P$ (and $\G^\P_+$) is then given by
		\[
		Y^{c,\P} = Z^\P\bcdot S^{c,\P},
		\]
		which then yields
		\[
		\mathsf{c}^{SY} = \mathsf{c}^S Z^\P, \; \textnormal{$\P\otimes \d\mathsf{A}$--a.e.}
		\]
		It is then straightforward to check that
		\[
		(Z - Z^\P)^\top \mathsf{c}^S (Z-Z^\P) = 0, \; \textnormal{$\P\otimes \d\mathsf{A}$--a.e.},
		\]
		which implies that $Z$ is $S^c$-integrable in the sense of vector stochastic integration of continuous local martingales (see \cite[Section III.4a, p. 179ff.]{jacod2003limit}), and that therefore $Z \bcdot S^{c,\P} = Z^\P\bcdot S^{c,\P}$ outside a $\P$--null set.
		
		\medskip
		It remains to show that $Z$ is $(S-S^{c,\P})$-integrable and that $Z \bcdot (S-S^{c,\P}) = Z^\P\bcdot(S-S^{c,\P})$, $\P$--a.s., to conclude. To do this, the proof of \cite[Theorem 2.4]{nutz2015robust} defines the process $\mathsf{A}^\ast$ through
		\[
		\frac{\d\mathsf{A}^\ast}{\d\mathsf{A}} = \min_{1\leq i \leq d} \frac{\d(\mathsf{C}^S)^{ii}}{\d\mathsf{A}},
		\]
		which then still yields $\mathsf{c}^{SY} = \mathsf{c}^S Z^\P$, $\P\otimes\d\mathsf{A}^\ast$--a.e., since $\mathsf{A}^\ast \ll \mathsf{A}$ outside a $\P$--null set. However, $\mathsf{c}^S$ is not invertible $\P\otimes \d\mathsf{A}^\ast$--a.e. in general, since a positive semi-definite matrix with strictly positive entries on the diagonal may not be invertible; as an example, consider the $2\times 2$ matrix whose entries are all $1$'s. We give a more concrete counterexample after the proof. Instead, let us define
		\[
		\mathsf{A}^\ast \coloneqq \1_{\{\textnormal{det}(\mathsf{c})\neq 0\}} \bcdot \mathsf{A}.
		\]
		Then we still have $\mathsf{c}^{SY} = \mathsf{c}^S Z^\P$, $\P\otimes\d\mathsf{A}^\ast$--a.e., and $\mathsf{c}^S$ is now indeed invertible $\P\otimes\d\mathsf{A}^\ast$--a.e., which yields
		\begin{equation}\label{eq_Z_equals_Z_P}
			Z = (\mathsf{c}^S)^{-1} \mathsf{c}^{SY} = Z^\P, \; \textnormal{$\P\otimes\d\mathsf{A}^\ast$--a.e.}
		\end{equation}
		Our dominating diffusion property from \Cref{def_dominating_diffusion} still implies that $\mathsf{A}^\ast$ dominates the characteristics of $S - S^{c,\P}$. It then follows from \cite[Theorem III.6.30]{jacod2003limit} that $Z$ is indeed $(S - S^{c,\P})$-integrable relative to $\P$, and from \cite[Proposition~IX.5.3]{jacod2003limit} that the $\P$--semi-martingale characteristics of
		\[
		N \coloneqq Z \bcdot (S - S^{c,\P}) - Z^\P \bcdot (S - S^{c,\P}) = (Z - Z^\P) \bcdot (S - S^{c,\P}), \; \textnormal{$\P$--a.s.},
		\]
		are identically zero. Hence, $N = 0$, $\P$--a.s., and, therefore,
		\[
		Z \bcdot (S - S^{c,\P}) = Z^\P \bcdot (S - S^{c,\P}), \; \textnormal{$\P$--a.s.}
		\]
		This concludes the proof.
	\end{proof}
	
	We shed some light on the modified dominating diffusion property from \Cref{def_dominating_diffusion} by considering an example of a $\sigma$-martingale measure for $S$ that satisfies the dominating diffusion property in the sense of \cite[Definition 2.2]{nutz2015robust} (as opposed to the version used in this work), but for which the $\S^d_+$-valued process $\mathsf{c}^S$ in the preceding proof is never invertible.
	
	\begin{example}\label{exp_robust}
		Let $d = 2$, and let $S = X$ be the canonical process on $\Omega$. Consider the probability law $\P\in\fP(\Omega)$ for which $S = (S^1,S^2)$ is a semi-martingale with $\P$-characteristics
		\[
		\mathsf{B}^{S}_t = \begin{pmatrix} 0 \\ 0 \end{pmatrix}, \; 
		\mathsf{C}^S_t = \begin{pmatrix} t & t \\ t & t \end{pmatrix}, \; \nu(\d t, \d x) = \delta_{(1,1)}(\d x)\d t,
		\]
		for the truncation function $h = (h^1,h^2) : \R^2 \longmapsto \R^2$ given by $h(x) = x \1_{\{|x| \leq 1\}}$. The existence of $\P$ can be deduced from the results in \textnormal{\cite[Section IX.2d, p.~535~ff.]{jacod2003limit}}. Note that $S$ is a $\P$--$\sigma$-martingale by \textnormal{\cite[Proposition~III.6.35]{jacod2003limit}}.
		In the notation of the preceding proof, we have
		\[
		\d\mathsf{A} = 2 \, \d t, \; \mathsf{c}^S = \begin{pmatrix} 1/2 & 1/2 \\ 1/2 & 1/2 \end{pmatrix},
		\]
		so that $\mathsf{c}^S$ is never invertible, even though, under $\P$, the process $S$ is a $\P$--$\sigma$-martingale satisfying the dominating diffusion property in the sense of \textnormal{\cite[Definition 2.2]{nutz2015robust}}, but not in the sense of our \textnormal{\Cref{def_dominating_diffusion}}. It therefore appears that, for the proof of \textnormal{\Cref{thm_optional_decomposition}} to go through, the dominating diffusion property in the sense of \textnormal{\Cref{def_dominating_diffusion}} is essential.
	\end{example}
	
	We conclude this section with the proof of the hedging duality.
	
	\subsection{Proof of Theorem \ref{thm_duality}}\label{sec_proof_duality}

	We closely follow the proof of \cite[Theorem 3.2]{nutz2015robust}. We first prove the inequality $(\leq)$. Fix $\P\in\fP$, and let $y \in \R$ be such that there exists $Z \in \L_\textnormal{adm}(S,\fP)$ with
	\[
	\xi_t \leq y + Z \bcdot S_t, \; t \in [0,T], \; \textnormal{$\P$--a.s.}
	\]
	Since $Z \bcdot S$ is a $(\G^\P_+,\P)$--super-martingale starting at zero, we find
	\[
	\E^\P[\xi_\tau] \leq y + \E^\P[Z\bcdot S_\tau] \leq y,
	\]
	for any $\tau \in \cT_{0,T}(\G_+)$. Taking the supremum over $\tau$ and $\P$ on the left-hand side then yields the inequality ($\leq$) in \eqref{eq_duality}.
	
	\medskip
	We turn to the converse inequality $(\geq)$. Let $Y = (Y_t)_{t \in [0,T]}$ be the process constructed in \Cref{thm_construction_capital_process}. According to \Cref{thm_optional_decomposition}, there exists $Z = Z\1_{\llbracket 0,T\rrbracket} \in \L(S,\fP)$ such that $Y - Z\bcdot S$ is $\P$--a.s. non-increasing for every $\P\in\fP$, which implies
	\[
	\xi_t \leq Y_t \leq Y_0 + Z\bcdot S_t, \; t \in [0,T], \; \textnormal{$\P$--a.s.}, \; \P\in\fP.
	\]
	It is enough to show that $Y_0 \leq \sup_{\P\in\fP} \sup_{\tau \in \cT_{0,T}(\G_\smalltext{+})} \E^{\P}[\xi_\tau] \eqqcolon \cY_0$, $\P$--a.s. for every $\P\in\fP$. Indeed, this would imply that $\cY_0 \in \R$ belongs to the set on the right-hand side of \eqref{eq_duality}, thereby establishing the reverse inequality ($\geq$) in \eqref{eq_duality}, and since then
	\[
	\E^\P[\xi_T|\cG_{t+}] - \cY_0 \leq Y_t - \cY_0 \leq Z\bcdot S_t, \; \textnormal{$\P$--a.s.},\;  t \in [0,T], \; \P\in\fP,
	\]
	it follows from \cite[Theorem 14.5.13]{delbaen2006mathematics} that $Z\bcdot S$ is a $\P$--super-martingale for every $\P\in\fP$; thus $Z \in \L_\textnormal{adm}(S,\fP)$.
	
	\medskip
	To this end, we will show that
	\begin{equation}\label{sup_radon_nikodym}
		Y_0 \leq \sup \big\{\E^\Q[Y_0] \,\big|\, \Q \in \fP(\Omega,\cG_{0+}), \, \Q \approx \P|_{\cG_{\smalltext{0}\smalltext{+}}}\big\}  
		\leq \cY_0, \; \textnormal{$\P$--a.s.}, \; \P\in\fP,
	\end{equation}
	where $\fP(\Omega,\cG_{0+})$ denotes the collection of all probability measures on $(\Omega,\cG_{0+})$ and $\P|_{\cG_{\smalltext{0}\smalltext{+}}}$ the restriction of $\P$ to $\cG_{0+}$. The second inequality can be argued as follows. For $\Q \in \fP(\Omega,\cG_{0+})$ with $\Q \approx \P|_{\cG_{\smalltext{0}\smalltext{+}}}$, we define $\overline{\P} \in \fP(\Omega)$ by $\d\overline{\P} \coloneqq \frac{\d\Q}{\d\P|_{\cG_{\smalltext{0}\smalltext{+}}}}\d\P$, and note that the $(\G^\P_+,\overline{\P})$--semi-martingale characteristics of $S$ are the same as the $(\G^\P_+,\P)$--semi-martingale characteristics, by Girsanov's theorem (see \cite[Theorem~III.3.24]{jacod2003limit}). Since the property of being a $\sigma$-martingale can be read from the characteristics (see \cite[Proposition~III.6.35.b)]{jacod2003limit}), it follows that $\overline{\P}$ is a $\sigma$-martingale measure for $S$. Thus, by the saturation property and the fact that $\overline{\P} \approx \P$, we have $\overline{\P} \in \fP$. Therefore,
	\[
	\E^\Q[Y_0] = \E^{\bar{\P}}[Y_0] \leq \sup_{\P\in\fP}\E^\P[Y_0] \leq \cY_0.
	\]
	The last inequality follows from \eqref{eq_Y_0_equaly_xi}. 
	
	\medskip
	It remains to argue that
	\[
	Y_0 \leq \sup \big\{\E^\Q[Y_0] \,\big|\, \Q \in \fP(\Omega,\cG_{0+}), \, \Q \approx \P|_{\cG_{\smalltext{0}\smalltext{+}}}\big\}, \; \textnormal{$\P$--a.s.}, \P\in\fP,
	\]
	holds, which we do by contradiction. For ease of notation, we abbreviate the right-hand side to $\sup_{\Q\approx\P}\E^\Q[Y_0]$. Fix $\P\in\fP$, and suppose that
	\[
	\P[A] > 0, \; \textnormal{where} \; A \coloneqq \bigg\{Y_0 > \sup_{\Q\approx\P}\E^\Q[Y_0]\bigg\} \in \cG_{0+}.
	\]
	Then
	\[
	\E^{\P}[\1_A Y_0] >  \P[A] \sup_{\Q\approx\P}\E^\Q[Y_0],
	\]
	and thus
	\[
	\frac{\E^{\P}[\1_A Y_0]}{\P[A]} > \sup_{\Q\approx\P}\E^\Q[Y_0].
	\]
	By letting $\Q_\varepsilon \coloneqq \varepsilon \P|_{\cG_{\smalltext{0}\smalltext{+}}} + (1-\varepsilon)\P|_{\cG_{\smalltext{0}\smalltext{+}}}[\cdot\cap A]/\P[A]$, we find
	\[
	\E^{\Q_\smalltext{\varepsilon}}[Y_0] = \varepsilon\E^\P[Y_0] + (1-\varepsilon)\frac{\E^{\P}[\1_A Y_0]}{\P[A]} > \sup_{\Q\approx\P}\E^\Q[Y_0],
	\]
	for $\varepsilon \in (0,\infty)$ small enough. Since $\Q_\varepsilon$ is a measure on $\cG_{0+}$ and equivalent to $\P|_{\cG_{\smalltext{0}\smalltext{+}}}$, we obtain a contradiction. Therefore, $\P[A] = 0$, which implies $Y_0 \leq \cY_0$, $\P$--a.s., for all $\P \in \fP$. This concludes the proof.
	
	\section{Proof of Theorem \ref{thm_construction_capital_process}}\label{sec_proof_aggregation}
	
	This section is dedicated to proving our main aggregation result, \Cref{thm_construction_capital_process}. The proof is organized in several steps, each of which builds on the previous one.
	
	\medskip
	Let
	\begin{equation*}
		\cY_t(\omega) \coloneqq \sup_{\P\in\fP(t,\omega)} \E^\P\big[Y^{\P}_0(T-t,\xi^{t,\omega}_{t+\cdot})\big], \; t \in [0,T], \; \omega \in \Omega,
	\end{equation*}
	where $Y^{\P}(T-t,\xi^{t,\omega}_{t+\cdot})$ denotes the $(\F_+,\P)$--Snell envelope (see \cite[Appendix I]{dellacherie1982probabilities}) of $\xi^{t,\omega}_{t+\cdot} = (\xi^{t,\omega}_{t+s})_{s \in [0,T-t]}$, that is, $Y^{\P}(T-t,\xi^{t,\omega}_{t+\cdot})$ is the right-continuous $(\F_+,\P)$--super-martingale satisfying
	\[
	Y^{\P}_s(T-t,\xi^{t,\omega}_{t+\cdot})=  \underset{\tau\in\cT_{\smalltext{s}\smalltext{,}\smalltext{T}\smalltext{-}\smalltext{t}}(\F_\smalltext{+})}{{\esssup}^\P} \E^\P[\xi^{t,\omega}_{t+\tau}|\cF_{s+}], \; \textnormal{$\P$--a.s.}, \; s \in [0,T-t].
	\]
	Note that we could replace the filtration $\F_+$ in the above line by $\G_+$ since every stopping time relative to the latter filtration is $\P$--a.s. equal to a stopping time of the former filtration; see \cite[Theorem~IV.59]{dellacherie1978probabilities}.
	
	\medskip
	\textbf{Step 1:} Fix $t\in[0,T]$. We show that $\cY_t$ is upper semi-analytic and $\cF^\ast_t$-measurable. The former means that the set $\{\omega\in\Omega \,|\,\cY_t(\omega) > \lambda\}$ is an analytic subset of $\Omega$ for every $\lambda \in \R$.	To deduce upper semi-analyticity of $\cY_t$, it suffices by \Cref{ass::measures}.$(i)$ and \cite[Proposition 7.47]{bertsekas1978stochastic} to show that, for fixed $t\in[0,T]$, the map
	\[
	D \ni (\omega,\P) \longmapsto \E^\P\big[Y^{\P}_0(T-t,\xi^{t,\omega}_{t+\cdot})\big] \in [-\infty,\infty]
	\]
	is Borel-measurable, where $D = \{(\omega,\P) \,|\, \omega\in\Omega,\, \P\in\fP(t,\omega)\}$; Borel-measurable functions are upper semi-analytic.
	
	\medskip
	Fix $\omega \in \Omega$ and then $\P\in\fP(t,\omega)$. It follows from \cite[Corollary 2.2 and Theorem 2.13]{klimsiak2015reflected} that there exists a, up to $\P$--indistinguishability, unique solution $(\sY,\sM,\sK)$ to the reflected BSDE relative to $(\F_+,\P)$ with terminal time $T-t$, obstacle $\xi^{t,\omega}_{t+\cdot}$, and generator identically zero
	\begin{equation}\label{eq_reflected_bsde}
		\begin{cases}
			\displaystyle \sY_s = \xi^{t,\omega}_{T} - \int_s^{T-t} \d\sM_r + \int_s^{T-t}\d \sK_r, \; s \in [0,T-t], \\[1em]
			\displaystyle \sY \geq \xi^{t,\omega}_{t+\cdot}, \\[1em]
			\displaystyle \int_0^{T-t}(\sY_{r-} - (\xi^{t,\omega}_{t+\cdot})_{r-}) \d \sK_r = 0,
		\end{cases}
	\end{equation}
	such that $\sY = (\sY_s)_{s \in [0,T-t]}$ is right-continuous, $\F_+$--adapted, and of class $(D)$ relative to $(\F_+,\P)$, $\sM = (\sM_s)_{s \in [0,T-t]}$ is a right-continuous, $(\F_+,\P)$-martingale starting at zero, and $\sK = (\sK_s)_{s \in [0,T-t]}$ is a right-continuous and $\P$--a.s. non-decreasing, $\F$-predictable process starting at zero satisfying $\E^\P[\sK_{T-t}] < \infty$. By \cite[Corollary 2.9]{klimsiak2015reflected}, we must have
	\[
	\sY_s = \underset{\tau\in\cT_{\smalltext{s}\smalltext{,}\smalltext{T}\smalltext{-}\smalltext{t}}(\G_\smalltext{+})}{{\esssup}^\P} \E^\P[\xi^{t,\omega}_{t+\tau}|\cG_{s+}]
	= Y^{\P}_s(T-t,\xi^{t,\omega}_{t+\cdot}), \; \textnormal{$\P$--a.s.}, \; s \in [0,T-t].
	\]
	Note that $\xi^{t,\omega}_{t+\cdot}$ is $\F_+$-adapted by Galmarino's test; see \cite[Theorem IV.101.(b)]{dellacherie1978probabilities}.
	
	\medskip
	It follows from \cite[Theorem 2.13]{klimsiak2015reflected} that
	\[
	Y^{\P,n}_0 (T-t,\xi^{t,\omega}_{t+\cdot}) \nearrow Y^{\P}_0 (T-t,\xi^{t,\omega}_{t+\cdot}), \; n \rightarrow \infty, \; \textnormal{$\P$--a.s.},
	\]
	where $Y^{\P,n} (T-t,\xi^{t,\omega}_{t+\cdot})$ denotes the first component of the, up to $\P$--indistinguishability, unique solution $(\sY^n,\sM^n)$ to the BSDE
	\[
	\sY^n_s = \xi^{t,\omega}_{T} + n \int_s^{T-t} \max\{0,\xi^{t,\omega}_r-\sY^n_r\} \d r - \int_s^{T-t} \d\sM^n_r, \; s \in [0,T-t], \; \textnormal{$\P$--a.s.},
	\]
	where $\sY^n = (\sY^n_s)_{s \in [0,T-t]}$ is right-continuous and of Doob class $(D)$ relative to $(\F_+,\P)$, and $\sM^n = (\sM^n_s)_{s \in [0,T-t]}$ is a right-continuous $(\F_+,\P)$-martingale starting at zero. The existence and uniqueness are ensured by \cite[Corollary 2.2 and Theorem 2.7]{klimsiak2013dirichlet}; it is straightforward to check that conditions (H1)--(H4) in \cite{klimsiak2013dirichlet} are satisfied, since, in particular, (H4) holds because
	\[
	\E^\P\bigg[\int_0^{T-t} \max\{0,\xi^{t,\omega}_{t+r}\} \d r\bigg]
	= \int_0^{T-t}\E^\P\big[ \max\{0,\xi^{t,\omega}_{t+r}\} \big] \d r
	\leq (T-t) \sup_{\tau \in \cT_{\smalltext{0}\smalltext{,}\smalltext{T}\smalltext{-}\smalltext{t}}(\G_\smalltext{+})}\E^\P\big[ \max\{0,\xi^{t,\omega}_{t+\tau}\}\big] < \infty.
	\]
	The last inequality follows because $\P$--uniform integrability implies boundedness in $\L^1(\P)$. We now take a closer look at the construction of $Y^{\P,n} (T-t,\xi^{t,\omega}_{t+\cdot})$. From the proof of \cite[Theorem 2.7]{klimsiak2013dirichlet}, we find that
	\[
	\lim_{m \rightarrow \infty}\E^\P\big[Y^{\P,n,m}_0 (T-t,\xi^{t,\omega}_{t+\cdot})\big] = \E^\P\big[Y^{\P,n}_0 (T-t,\xi^{t,\omega}_{t+\cdot})\big],
	\]
	where $Y^{\P,n,m} (T-t,\xi^{t,\omega}_{t+\cdot})$ denotes the first component of the, up to $\P$-indistinguishability, unique solution $(\sY^{n,m},\sM^{n,m})$ in $\cS^2_{T-t}(\F_+,\P) \times \cH^{2}_{0,T-t}(\F_+,\P)$ to the BSDE
	\begin{equation}\label{eq_BSDE_n_m}
		\sY^{n,m}_s = \xi^{t,\omega,m} + \int_s^{T-t} f^{t,\omega,n,m}_r(\sY^{n,m}_r) \d r - \int_s^{T-t} \d \sM^{n,m}_r, \; s \in [0,T-t], \; \textnormal{$\P$--a.s.},
	\end{equation}
	where
	\[
	\xi^{t,\omega,m} \coloneqq \min\{\max\{\xi^{t,\omega}_T,-m\},m\}, \; f^{t,\omega,n,m}_r(y) \coloneqq n\max\{0,\xi^{t,\omega}_r-y\} - n\max\{0,\xi^{t,\omega}_r\} + \min\{\max\{n\max\{0,\xi^{t,\omega}_r\},-m\},m\},
	\]
	and $\cS^2_{T-t}(\F_+,\P)$ denotes the Banach space of real-valued, right-continuous, $\F_+$-adapted processes $Y = (Y_s)_{s\in[0,T-t]}$ satisfying
	\[
	\|Y\|^2_{\cS^\smalltext{2}_{\smalltext{T}\smalltext{-}\smalltext{t}}} 
	\coloneqq \E^\P\bigg[ \sup_{s \in [0,T-t]} |Y_s|^2 \bigg] < \infty,
	\]
	and $\cH^2_{0,T-t}(\F_+,\P)$ the Banach space of real-valued, right-continuous, $(\F_+,\P)$-martingales $M = (M_s)_{s \in [0,T-t]}$ starting at zero and satisfying
	\[
	\|M\|^2_{\cH^\smalltext{2}_{\smalltext{0}\smalltext{,}\smalltext{T}\smalltext{-}\smalltext{t}}}
	\coloneqq \E^\P\big[|M_{T-t}|^2\big] < \infty.
	\]
	The existence and uniqueness of the solution to \eqref{eq_BSDE_n_m} is ensured by \cite[Lemma 2.6]{klimsiak2013dirichlet}. There, one further shows that
	\[
	Y^{\P,n,m,\ell}_s (T-t,\xi^{t,\omega}_{t+\cdot}) \nearrow Y^{\P,n,m}_s (T-t,\xi^{t,\omega}_{t+\cdot}), \; \ell \rightarrow \infty, \; s \in [0,T-t], \; \textnormal{$\P$--a.s.},
	\]
	where $Y^{\P,n,m,\ell} (T-t,\xi^{t,\omega}_{t+\cdot})$ denotes the first component of the unique solution $(\sY^{n,m,\ell},\sM^{n,m,\ell})$ in $\cS^2_{T-t}(\F_+,\P) \times \cH^{2}_{0,T-t}(\F_+,\P)$ to the BSDE
	\[
	\sY^{n,m,\ell}_s = \xi^{t,\omega,m} + \int_s^{T-t} f^{t,\omega,n,m,\ell}_r(\sY^{n,m,\ell}_r) \d r - \int_s^{T-t} \d \sM^{n,m,\ell}_r, \; s \in [0,T-t], \; \textnormal{$\P$--a.s.},
	\]
	where
	\[
	f^{t,\omega,n,m,\ell}_r(y) \coloneqq \inf_{q \in \Q}\big\{\ell |y-q| + f^{t,\omega,n,m}_r(q)\big\};
	\]
	here, $\Q$ denotes the rational numbers. The last BSDE is a standard BSDE with a bounded terminal condition and with a generator satisfying
	\[
	|f^{t,\omega,n,m,\ell}_r(0)| \leq m, \; \textnormal{and} \; |f^{t,\omega,n,m,\ell}_r(y) - f^{t,\omega,n,m,\ell}_r(y^\prime)| \leq \ell |y-y^\prime|.
	\]
	We are thus now in the standard setting of BSDEs with Lipschitz generators. Thus, the construction of the component $Y^{\P,n,m,\ell} (T-t,\xi^{t,\omega}_{t+\cdot})$ follows the usual scheme and is therefore the limit in $\cS^2_{T-t}(\F_+,\P)$ of a Picard iteration
	\[
	Y^{\P,n,m,\ell,1} (T-t,\xi^{t,\omega}_{t+\cdot}), \; Y^{\P,n,m,\ell,2} (T-t,\xi^{t,\omega}_{t+\cdot}), \; Y^{\P,n,m,\ell,3} (T-t,\xi^{t,\omega}_{t+\cdot}), \ldots
	\]
	where we recursively define
	\[
	Y^{\P,n,m,\ell,k+1}_s (T-t,\xi^{t,\omega}_{t+\cdot})
	\coloneqq \E^\P\bigg[ \xi^{t,\omega,m} + \int_s^{T-t} f^{t,\omega,n,m,\ell}_s(Y^{\P,n,m,\ell,k}_s) \d r \bigg| \cF_{s+} \bigg], \; s \in [0,T-t],
	\]
	with starting point $Y^{\P,n,m,\ell,0} (T-t,\xi^{t,\omega}_{t+\cdot}) \equiv 0$. A straightforward modification of \cite[Lemma 3.1 and 3.5]{neufeld2014measurability} then yields the Borel-measurability of
	\[
	(\omega,\P) \longmapsto \E^\P\big[Y^{\P,n,m,\ell,k}_0\big],
	\]
	for each positive integer $n,m,\ell$, and $k$ and a general $\P\in\fP(\Omega)$. Now consider the Borel-measurable (and thus upper semi-analytic) map $\Y : \Omega \times \fP(\Omega) \longrightarrow [-\infty,\infty]$ defined by
	\[
	\Y(\omega,\P) \coloneqq \limsup_{n\rightarrow \infty} \limsup_{m \rightarrow \infty} \limsup_{\ell \rightarrow \infty} \limsup_{k \rightarrow \infty} \E^\P\big[Y^{\P,n,m,\ell,k}_0\big]. 
	\]
	Our previous chain of arguments thus shows that
	\begin{equation}\label{eq_definition_bold_Y}
		\Y(\omega,\P) = \E^\P\big[Y^{\P}_0(T-t,\xi^{t,\omega}_{t+\cdot})\big], \; \P\in\fP(t,\omega).
	\end{equation}
	We therefore conclude that
	\[
	\cY_t(\omega) = \sup_{\P\in\fP(t,\omega)} \E^\P\big[Y^{\P}_0(T-t,\xi^{t,\omega}_{t+\cdot})\big] = \sup_{\P\in\fP(t,\omega)} \Y(\omega,\P), \; \omega \in \Omega,
	\]
	is upper semi-analytic by \cite[Proposition 7.47]{bertsekas1978stochastic}, and thus also $\cF^\ast$-measurable (see \cite[Corollary 8.4.3]{cohn2013measure}). Furthermore, since $\fP(t,\omega) = \fP(t,\omega_{\cdot\land t})$ and $Y^{\P}(T-t,\xi^{t,\omega}_{t+\cdot}) = Y^{\P}(T-t,\xi^{t,\omega_{\smalltext{\cdot}\smalltext{\land}\smalltext{t}}}_{t+\cdot})$, $\P$--a.s., it follows that
	\[
	\cY_t(\omega) = \cY_t(\omega_{\cdot\land t}), \; \omega \in \Omega,
	\] 
	from which we deduce that $\cY_t$ is $\cF^\ast_t$-measurable; see the arguments in the proof of \cite[Lemma 2.5]{nutz2013constructing}.
	
	\medskip
	Before turning to the next step, we need the following intermediate result.
	
	\begin{lemma}\label{lem_conditioning}
		Let $(\omega,t) \in \Omega\times[0,T]$ and $\P\in\fP(t,\omega)$. There exists a $\P$--null set $N \in \cF_t$ with
		\[
		\P^{t,\omega}\in\fP(t,\omega) \; \textnormal{and} \; \E^{\P}[Y^{\P}_t(T,\xi)|\cF_t](\omega)
		= \E^{\P^{\smalltext{t}\smalltext{,}\smalltext{\omega}}}[Y^{\P^{\smalltext{t}\smalltext{,}\smalltext{\omega}}}_0(T-t,\xi^{t,\omega}_{t+\cdot})], \; \omega \in \Omega\setminus N.
		\]
	\end{lemma}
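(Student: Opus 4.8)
The plan is to read the asserted equality as a dynamic-programming (flow) identity for the Snell envelope under conditioning, and to deduce it from the reflected-BSDE characterisation already used in Step~1 together with the \emph{uniqueness} part of \cite[Corollary~2.2 and Theorem~2.13]{klimsiak2015reflected}. The membership $\P^{t,\omega}\in\fP(t,\omega)$ for $\P$--a.e.\ $\omega$ is exactly \Cref{ass::measures}.$(ii)$, and the corresponding exceptional set may be taken in $\cF_t$: since $\P^{t,\cdot}$ and $\fP(t,\cdot)$ depend only on $\omega_{\cdot\land t}$, and since by \Cref{ass::measures}.$(i)$ and Borel-measurability of $\omega\mapsto(\omega,\P^{t,\omega})$ the set $\{\omega:\P^{t,\omega}\in\fP(t,\omega)\}$ is analytic, hence universally (and even $\cF^\ast_t$--) measurable as in the proof of \cite[Lemma~2.5]{nutz2013constructing}, one enlarges a $\cF^\ast_t$--measurable $\P$--null set to one in $\cF_t$. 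The same applies to every ``for $\P$--a.e.\ $\omega$'' set arising below, each being defined through a conditional expectation given $\cF_t$ or through $\omega_{\cdot\land t}$--measurable data; the final $N\in\cF_t$ is then obtained as a countable union. It remains to establish the displayed equality for $\P$--a.e.\ $\omega$.

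Recalling from Step~1, applied with $t=0$ so that $\xi^{0,\cdot}_{0+\cdot}=\xi$, that $Y^\P(T,\xi)$ equals up to $\P$--indistinguishability the first component $\sY$ of the reflected BSDE on $[0,T]$ with obstacle $\xi$, generator $0$ and terminal value $\xi_T$ relative to $(\F_+,\P)$, write $(\sY,\sM,\sK)$ for that solution. Define, for $s\in[0,T-t]$,
\[
\sY^\omega_s:=\sY_{t+s}(\omega\otimes_t\cdot),\quad
\sM^\omega_s:=\sM_{t+s}(\omega\otimes_t\cdot)-\sM_t(\omega\otimes_t\cdot),\quad
\sK^\omega_s:=\sK_{t+s}(\omega\otimes_t\cdot)-\sK_t(\omega\otimes_t\cdot).
\]
Because the preimage of $\cF_{(t+s)+}$ under $\tilde\omega\mapsto\omega\otimes_t\tilde\omega$ is contained in $\cF_{s+}$, Galmarino's test makes $\sY^\omega$ and $\sM^\omega$ right-continuous and $\F_+$--adapted and $\sK^\omega$ moreover $\F$--predictable. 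The integral equation, the inequality $\sY\ge\xi$ and the Skorokhod minimality condition are pathwise relations on $[t,T]$, so, using $\sY_T=\xi_T$ and $\xi_{t+\cdot}(\omega\otimes_t\cdot)=\xi^{t,\omega}_{t+\cdot}$ and pushing the $\P$--a.s.\ statements through the disintegration of $\P$ over $\cF_t$ (recall $\P^{t,\omega}[\,\cdot\,]=\P^t_\omega[\omega\otimes_t\,\cdot\,]$), they transfer verbatim: for $\P$--a.e.\ $\omega$, $\P^{t,\omega}$--a.s.\ one has $\sY^\omega_s=\xi^{t,\omega}_T+\int_s^{T-t}\d\sK^\omega_r-\int_s^{T-t}\d\sM^\omega_r$, $\sY^\omega\ge\xi^{t,\omega}_{t+\cdot}$ and $\int_0^{T-t}\bigl(\sY^\omega_{r-}-(\xi^{t,\omega}_{t+\cdot})_{r-}\bigr)\d\sK^\omega_r=0$.

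Next I would verify the remaining hypotheses of Klimsiak's theorem for the shifted data under $\P^{t,\omega}$, for $\P$--a.e.\ $\omega$: that $\sM^\omega$ is a $(\F_+,\P^{t,\omega})$--martingale (this is the stability of the martingale property under conditioning by $\cF_t$ and concatenation at $t$, i.e.\ the c\`adl\`ag-path version of the classical fact for regular conditional distributions and martingale problems; cf.\ \cite{neufeld2014measurability,nutz2013constructing}); that $\E^{\P^{t,\omega}}[\sK^\omega_{T-t}]\le\E^{\P^t_\omega}[\sK_T]=\E^\P[\sK_T\,|\,\cF_t](\omega)<\infty$, since $\E^\P[\sK_T]<\infty$; and that $\sY^\omega$ is of class $(D)$ relative to $(\F_+,\P^{t,\omega})$ — the latter by writing $\sY_s=N_s-\sK_s$ with $N_s:=\E^\P[\xi_T+\sK_T\,|\,\cF_{s+}]$ the $(\F_+,\P)$--martingale closed by $\xi_T+\sK_T\in\L^1(\P)$ (here $\xi_T\in\L^1(\P)$ by the assumed bound on $\xi$), whence $N^\omega:=N_{t+\cdot}(\omega\otimes_t\cdot)$ is, off a null set, a $(\F_+,\P^{t,\omega})$--martingale closed by an integrable random variable, and $\{\sY^\omega_\sigma\}_{\sigma\in\cT_{0,T-t}(\F_+)}=\{N^\omega_\sigma-\sK^\omega_\sigma\}$ is $\P^{t,\omega}$--uniformly integrable because $0\le\sK^\omega_\sigma\le\sK^\omega_{T-t}\in\L^1(\P^{t,\omega})$ (up to a fixed $\cF_{0+}$--measurable integrable shift). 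Since $\P^{t,\omega}\in\fP(t,\omega)$ off a null set, the $\P^{t,\omega}$--uniform integrability of $\{\xi^{t,\omega}_{t+\tau}\}_{\tau\in\cT_{0,T-t}(\F_+)}$ from the hypotheses is available, exactly as in Step~1, so \cite[Corollary~2.2 and Theorem~2.13]{klimsiak2015reflected} applies and its uniqueness part gives $\sY^\omega=Y^{\P^{t,\omega}}(T-t,\xi^{t,\omega}_{t+\cdot})$ up to $\P^{t,\omega}$--indistinguishability. Evaluating at $s=0$, taking $\P^{t,\omega}$--expectations, and using $\sY^\omega_0=(\sY_t)^{t,\omega}$, the identity $\E^{\P^{t,\omega}}[(\sY_t)^{t,\omega}]=\E^{\P^t_\omega}[\sY_t]=\E^\P[\sY_t\,|\,\cF_t](\omega)$ and $\sY_t=Y^\P_t(T,\xi)$, $\P$--a.s., yields the claimed equality for $\P$--a.e.\ $\omega$; intersecting with the exceptional sets recorded above gives the required $N\in\cF_t$.

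I expect the main obstacle to be the martingale-stability step: that conditioning by $\cF_t$ and concatenating at $t$ turns $\sM_{t+\cdot}-\sM_t$ into a $(\F_+,\P^{t,\omega})$--martingale for $\P$--a.e.\ $\omega$. This is classical in spirit but delicate on the \emph{raw} canonical filtration — one must transfer adaptedness and predictability through the concatenation map via Galmarino's test, accommodate that $\cF_{0+}$ need not be trivial, and ensure that all ``for $\P$--a.e.\ $\omega$'' statements merge into a single $\cF_t$--measurable exceptional set; everything else is either a pathwise manipulation of the reflected-BSDE relations or routine integrability bookkeeping via conditional expectations. (An alternative that avoids invoking Klimsiak's uniqueness would prove the two inequalities separately: ``$\le$'' from the essential-supremum-over-stopping-times representation of $Y^\P_t(T,\xi)$ and Galmarino's test, and ``$\ge$'' from minimality of the Snell envelope applied to the shifted process $\sY^\omega$, which instead requires the stability of the \emph{super}-martingale property under conditioning — but the reflected-BSDE route above is closer in spirit to Step~1.)
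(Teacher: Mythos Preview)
Your proposal is correct and follows essentially the same route as the paper: shift the reflected-BSDE solution triple to $[0,T-t]$, verify that it solves the shifted reflected BSDE under $\P^{t,\omega}$ (the martingale-stability step you flag as the main obstacle is exactly what the paper proves in full detail, via a countable dense set, a monotone-class argument and backward martingale convergence), and conclude by uniqueness. The only real difference in execution is how the null set is placed in $\cF_t$: the paper invokes the Borel-measurable function $\Y$ built in Step~1 and composes it with the $\cF_t$-measurable map $\omega\mapsto(\omega_{\cdot\land t},\P^{t,\omega})$ to obtain an $\cF^\ast_t$-measurable version of the right-hand side directly, whereas you argue by tracking the $\cF^\ast_t$-measurability of each intermediate exceptional set and taking a countable union.
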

	
	\begin{proof}
		We fix $(\omega,t) \in \Omega \times [0,T]$ and then $\P\in\fP(t,\omega)$. The process $Y^\P(T,\xi)$ is the first component of the solution $(\sY,\sM,\sK)$ to the reflected BSDE with terminal time $T$, obstacle $\xi$ and generator zero, where $\sY = (\sY_s)_{s \in [0,T]}$ is $\P$--a.s. uniquely determined as a right-continuous, $\F_+$-adapted process of class $(D)$ relative to $(\F_+,\P)$, $\sM = (\sM_s)_{s \in [0,T]}$ is a right-continuous $(\F_+,\P)$-martingale starting at zero, and $\sK = (\sK_s)_{s \in [0,T]}$ is a right-continuous and $\P$--a.s. non-decreasing, $\F$-predictable process starting at zero satisfying $\E^\P[\sK_T] < \infty$. It is straightforward to check that 
		\begin{equation*}
			\begin{cases}
				\displaystyle \sY^{t,\omega}_{t+s} = \xi^{t,\omega}_{T} - \int_s^{T-t} \d(\sM^{t,\omega}_{t+\cdot}-\sM^{t,\omega}_t)_r + \int_s^{T-t}\d (\sK^{t,\omega}_{t+\cdot}-\sK^{t,\omega}_t)_r, \; s \in [0,T-t], \\[1em]
				\displaystyle \sY^{t,\omega}_{t+\cdot} \geq \xi^{t,\omega}_{t+\cdot}, \\[1em]
				\displaystyle \int_0^{T-t}(\sY^{t,\omega}_{(t+r)-} - \xi^{t,\omega}_{(t+r)-}) \d (\sK^{t,\omega}_{t+\cdot}-\sK^{t,\omega}_t)_r = 0,
			\end{cases}
		\end{equation*}
		holds $\P^{t,\omega}$--a.s., for $\P$--a.e. $\omega \in \Omega$. Suppose for the moment that we have shown that
		\begin{enumerate}
			\item[(i)] $\sM^{t,\omega}_{t+\cdot}-\sM^{t,\omega}_t$ is a right-continuous $(\F_+,\P^{t,\omega})$-martingale on $[0,T-t]$,
			\item[(ii)] $\sK^{t,\omega}_{t+\cdot}-\sK^{t,\omega}_t$ is right-continuous, $\P$--a.s. non-decreasing, $\F$-predictable, and satisfies $\E^{\P^{\smalltext{t}\smalltext{,}\smalltext{\omega}}}\big[\sK^{t,\omega}_{T}-\sK^{t,\omega}_t\big]<\infty$, and
			\item[(iii)] $\E^{\P^{\smalltext{t}\smalltext{,}\smalltext{\omega}}}\big[|\sY^{t,\omega}_t|\big]<\infty$, for $\P$--a.e. $\omega \in\Omega$.
		\end{enumerate}
		Since
		\[
		\sY^{t,\omega}_{t+s} = \sY^{t,\omega}_t + (\sM^{t,\omega}_{t+s}-\sM^{t,\omega}_t) - (\sK^{t,\omega}_{t+s}-\sK^{t,\omega}_t), \; s \in [0,T-t], \; \textnormal{$\P^{t,\omega}$--a.s.},
		\]
		it follows that the family
		\[
		\big\{\sY^{t,\omega}_{t+\tau} \,\big|\, \tau \in \cT_{0,T-t}(\F_+)\big\}
		\]
		is $\P^{t,\omega}$--uniformly integrable for $\P$--a.e. $\omega \in\Omega$. Hence, $(\sY^{t,\omega}_{t+\cdot},\sM^{t,\omega}_{t+s}-\sM^{t,\omega}_t,\sK^{t,\omega}_{t+\cdot}-\sK^{t,\omega}_t)$ is the unique solution to the reflected BSDE relative to $(\F_+,\P^{t,\omega})$ whose generator is identically zero, terminal time is $T-t$, and obstacle is $\xi^{t,\omega}_{t+\cdot}$, for $\P$--a.e. $\omega \in \Omega$. In particular, it then follows from \cite[Corollary 2.9 and Theorem 2.13]{klimsiak2015reflected} that
		\[
		\sY^{t,\omega}_{t+s} =  \underset{\tau\in\cT_{\smalltext{s}\smalltext{,}\smalltext{T}\smalltext{-}\smalltext{t}}(\G_\smalltext{+})}{{\esssup}^{\P^{\smalltext{t}\smalltext{,}\smalltext{\omega}}}} \E^{\P^{\smalltext{t}\smalltext{,}\smalltext{\omega}}}[\xi^{t,\omega}_{t+\tau}|\cG_{s+}] = Y^{\P^{\smalltext{t}\smalltext{,}\smalltext{\omega}}}_s(T-t,\xi^{t,\omega}_{t+\cdot}), \; \textnormal{$\P^{t,\omega}$--a.s.}, \; s \in [0,T-t], \; \textnormal{$\P$--a.e. $\omega \in \Omega$.}
		\]
		This then yields
		\[
		\E^\P[Y^\P_t(T,\xi)|\cF_t](\omega) = \E^\P[\sY_t |\cF_t](\omega) = \E^{\P^{\smalltext{t}\smalltext{,}\smalltext{\omega}}}[(\sY_t)^{t,\omega}] = \E^{\P^{\smalltext{t}\smalltext{,}\smalltext{\omega}}}[\sY_t^{t,\omega}] = \E^{\P^{\smalltext{t}\smalltext{,}\smalltext{\omega}}}[Y^{\P^{\smalltext{t}\smalltext{,}\smalltext{\omega}}}_0(T-t,\xi^{t,\omega}_{t+\cdot})], \; \textnormal{$\P$--a.e. $\omega \in \Omega$.}
		\]
		We now show that the $\P$--exceptional set on which the outer left-hand and right-hand sides do not coincide can be chosen to be in $\cF_t$.
		
		\medskip
		The map $\Y : \Omega \times \fP(\Omega) \longrightarrow [-\infty,\infty]$ constructed in \eqref{eq_definition_bold_Y} is a Borel-measurable extension of
		\[ 
		\{(\omega^\prime,\P^\prime) \,|\, \omega^\prime \in \Omega, \; \P^\prime\in\fP(t,\omega)\} \ni (\omega^\prime,\P^\prime) \longmapsto \E^{\P^\smalltext{\prime}}[Y^{\P^\smalltext{\prime}}_0(T-t,\xi^{t,\omega^\smalltext{\prime}}_{t+\cdot})] \in \R,
		\]
		Moreover, the map
		\[
		\Phi : \Omega \ni \omega \longmapsto (\omega_{\cdot\land t},\P^{t,\omega_{\smalltext{\cdot}\smalltext{\land}\smalltext{t}}}) \in \Omega \times \fP(\Omega)
		\]
		is $\cF_{t}$-measurable by \cite[Proposition~7.25, p.~133]{bertsekas1978stochastic} and Galmarino's test. Note that we have $\P^{t,\omega_{\cdot \land t}} = \P^{t,\omega}$ identically. By {\rm\Cref{ass::measures}}.$(ii)$, the map $\Phi$ takes values $\P$--a.s. in the analytic set $\{(\omega^\prime,\P^\prime) \,|\, \omega^\prime \in \Omega, \; \P^\prime \in \fP(t,\omega)\}$. Since analytic sets are universally measurable (see \cite[Corollary~8.4.3]{cohn2013measure}), the preimage of $\{(\omega^\prime,\P^\prime) \,|\, \omega^\prime \in \Omega, \; \P^\prime \in \fP(t,\omega)\}$ under $\Phi$ is therefore $\cF^\ast_{t}$-measurable. Consequently, there exists a set $\Omega_1 \in \cF_{t}$ with $\P[\Omega_1] = 1$ on which $\Phi$ attains values in $\{(\omega^\prime,\P^\prime) : \omega^\prime \in \Omega, \; \P^\prime \in \fP(t,\omega)\}$. In particular, this yields
		\[
		(\Y \circ \Phi)(\omega)
		= \E^{\P^{\smalltext{t}\smalltext{,}\smalltext{\omega}}}\big[Y^{\P^{\smalltext{t}\smalltext{,}\smalltext{\omega}}}_0 (T-t,\xi^{t,\omega}_{t+\cdot})\big], \omega \in \Omega_1,
		\]
		and then
		\[
		\E^\P\big[Y^{\P}_{t}(T,\xi) \big| \cF_{t}\big]
		= (\Y \circ \Phi), \; \textnormal{$\P$--a.s.}
		\]
		Since both sides are $\cF^\ast_{t}$-measurable, there exists a further set $\Omega_2 \in \cF_t$ with $\P[\Omega_2] = 1$ inside which equality holds above. For $\omega \in \Omega_1\cap\Omega_2$, we then have
		\[
		\E^\P\big[Y^{\P}_{t}(T,\xi) \big| \cF_{t}\big](\omega)
		= (\Y\circ\Phi)(\omega) 
		= \E^{\P^{\smalltext{t}\smalltext{,}\smalltext{\omega}}}\big[Y^{\P^{\smalltext{t}\smalltext{,}\smalltext{\omega}}}_0 (T-t,\xi^{t,\omega}_{t+\cdot})\big].		
		\]
		Thus, the set $N \coloneqq \Omega\setminus(\Omega_1\cap\Omega_2) \in \cF_{t}$ satisfies the desired properties.

		\medskip
		It remains to argue that (i), (ii) and (iii) hold. That (iii) holds follows simply from
		\[
		\E^\P\Big[\E^{\P^{\smalltext{t}\smalltext{,}\smalltext{\cdot}}}\big[|\sY^{t,\cdot}_t|\big]\Big] = \E^\P\Big[\E^{\P}\big[|\sY_t| \big|\cF_t\big]\Big] = \E^{\P}\big[|\sY_t|\big] < \infty.
		\]
		The integrability in (ii) follows analogously, right-continuity is immediate, and the property of being $\P^{t,\omega}$--a.s. non-decreasing for $\P$--a.e. $\omega \in \Omega$ can be shown using the right-continuity of the paths. That the process in (ii) is $\F$-predictable follows from product-measurability and that it is adapted to $\F_-$; see \cite[Theorem IV.97.(b) and IV.99.(b)]{dellacherie1978probabilities}. Lastly, we prove (i). We take inspiration from the proof of \cite[Lemma 3.3]{neufeld2016nonlinear}. It follows from Galmarino's test (see \cite[Theorem IV.101.(b)]{dellacherie1978probabilities}) that $\sM^{t,\omega}_{t+\cdot}$ is $\F_+$-adapted for every $\omega \in \Omega$. Next, let $\D$ be a dense subset in $[0,T-t]$ that contains $T-t$. It follows from
		\begin{equation*}
			\E^\P\Big[\E^{\P^{\smalltext{t}\smalltext{,}\smalltext{\cdot}}}\big[|\sM^{t,\cdot}_{t+s}|\big]\Big] = \E^\P\big[|\sM_{t+s}|\big] < \infty, \; s \in \D,
		\end{equation*}
		that 
		\[
		\E^{\P^{\smalltext{t}\smalltext{,}\smalltext{\omega}}}\big[|\sM^{t,\omega}_{t+s}|\big] < \infty, \; s \in \D, \; \textnormal{$\P$--a.e. $\omega \in \Omega$.}
		\]
		We turn to the martingale property along times in $\D$ for the moment.
		Fix numbers $r < r_n < s$ in $\D$ with $r_n \downarrow r$, a bounded, $\cF_{r_\smalltext{n}}$-measurable function $g_n$, and define $\tilde{g}_n(\omega)\coloneqq g_n(\omega_{t+\cdot}-\omega_{t})$. Then $\tilde{g}^{t,\omega}_n = g_n$ and $\tilde{g}_n$ is $\cF_{t+r_\smalltext{n}}$-measurable. By the optional sampling theorem under $\P$ (see \cite[Corollary 3.2.8]{weizsaecker1990stochastic}), we have
		\begin{equation*}
			\E^{\P^{\smalltext{t}\smalltext{,}\smalltext{\omega}}}\big[ \big(\sM^{t,\omega}_{t+s}-\sM^{t,\omega}_{t+r_\smalltext{n}} \big) g_n\big] 
			= \E^{\P}\big[ (\sM_{t+s}-\sM_{t+r_\smalltext{n}} ) \tilde{g}_n \big| \cF_{t+}\big](\omega)
			= \E^{\P}\Big[ \E^\P\big[ (\sM_{t+s}-\sM_{t+r_\smalltext{n}} ) \big| \cF_{(t+r_\smalltext{n})+} \big] \tilde{g}_n \Big| \cF_{t+}\Big](\omega) = 0,
		\end{equation*}
		for $\P$--a.e. $\omega \in \Omega$. A functional monotone class argument, together with the separability of $\cF_{r_\smalltext{n}}$, then implies that
		\begin{equation*}
			\E^{\P^{\smalltext{t}\smalltext{,}\smalltext{\cdot}}}\big[ \big(\sM^{t,\cdot}_{t+s}-\sM^{t,\cdot}_{t+r_\smalltext{n}} ) g\big]=0,
		\end{equation*}
		holds simultaneously for all bounded and $\cF_{r_\smalltext{n}}$-measurable functions $g$, on the complement of a $\P$--null set. This implies that
		\begin{equation*}
			\E^{\P^{\smalltext{t}\smalltext{,}\smalltext{\omega}}}\big[ \sM^{t,\omega}_{t+s}\big|\cF_{r_\smalltext{n}}\big] 
			= \E^{\P^{\smalltext{t}\smalltext{,}\smalltext{\omega}}}\big[\sM^{t,\omega}_{t+r_\smalltext{n}}\big|\cF_{r_\smalltext{n}}\big], \; \textnormal{$\P^{t,\omega}$--a.s.}, \; \textnormal{$\P$--a.e. $\omega \in \Omega$}.
		\end{equation*}
		By the backward martingale convergence theorem \cite[Theorem V.33]{dellacherie1982probabilities} the left-hand side converges $\P^{t,\omega}$--a.s. and in $\L^1(\P^{t,\omega})$ to $\E^{\P^{\smalltext{t}\smalltext{,}\smalltext{\omega}}}[\sM^{t,\omega}_{t+s}|\cF_{r+}]$ for $\P$--a.e. $\omega \in \Omega$. On the other hand, for the right-hand side, we obtain from
		\[
		\E^{\P^{\smalltext{t}\smalltext{,}\smalltext{\omega}}}\Big[\big|\sM^{t,\omega}_{t+r} - \E^{\P^{\smalltext{t}\smalltext{,}\smalltext{\omega}}}\big[\sM^{t,\omega}_{t+r_\smalltext{n}}\big|\cF_{r_\smalltext{n}}\big]\big| \Big] 
		\leq \E^{\P^{\smalltext{t}\smalltext{,}\smalltext{\omega}}}\Big[\big|\sM^{t,\omega}_{t+r} - \sM^{t,\omega}_{t+r_\smalltext{n}}\big| \Big] = \E^\P\Big[\big|\sM_{t+r} - \sM_{t+r_\smalltext{n}}\big|\Big|\cF_t\Big](\omega), \; \textnormal{$\P$--a.s.},
		\]
		that
		\[
		\E^\P\bigg[\E^{\P^{\smalltext{t}\smalltext{,}\smalltext{\cdot}}}\Big[\big|\sM^{t,\cdot}_{t+r} - \E^{\P^{\smalltext{t}\smalltext{,}\smalltext{\cdot}}}\big[\sM^{t,\cdot}_{t+r_\smalltext{n}}\big|\cF_{r_\smalltext{n}}\big]\big| \Big]\bigg] 
		\leq \E^\P \Big[\big|\sM_{t+r} - \sM_{t+r_\smalltext{n}}\big|\Big] \xrightarrow{n\rightarrow\infty} 0.
		\]
		Here, we used the fact that the martingale $\sM$ is $\P$--uniformly integrable and has right-continuous paths. This then implies that, along a subsequence (depending only on $\P$), if necessary,
		\[
		\lim_{n \rightarrow \infty}\E^{\P^{\smalltext{t}\smalltext{,}\smalltext{\omega}}}\big[\sM^{t,\omega}_{t+r_\smalltext{n}}\big|\cF_{r_\smalltext{n}}\big]
		= \sM^{t,\omega}_{t+r},
		\]
		in $\L^1(\P^{t,\omega})$, for $\P$--a.e. $\omega \in \Omega$. Therefore,
		\[
		\E^{\P^{\smalltext{t}\smalltext{,}\smalltext{\omega}}}[\sM^{t,\omega}_{t+s}|\cF_{r+}] = \sM^{t,\omega}_{t+r}, \; \textnormal{$\P^{t,\omega}$--a.s.}, \; \textnormal{$\P$--a.e. $\omega \in \Omega$.}
		\]
		Thus, $\sM^{t,\omega}_{t+\cdot}$ is an $(\F_+,\P^{t,\omega})$-martingale on $\D$ for $\P$--a.e. $\omega \in \Omega$. The integrability and martingale property extend to $[0,T]$ by right-continuity and the backward martingale convergence theorem. This concludes the proof of this intermediate result.
	\end{proof}
	
	\medskip
	{\bf Step 2:} We prove that
	\[
	\cY_t = \underset{\bar{\P}\in\fP(\cF_t,\P)}{{\esssup}^\P} \E^{\bar{\P}}\big[Y^{\bar{\P}}_t(T,\xi)\big|\cF_t\big], \; \textnormal{$\P$--a.s.}, \; t \in [0,T], \; \P\in\fP,
	\]
	where $\fP(\cF_t,\P) \coloneqq \big\{\overline{\P}\in\fP \,|\, \textnormal{$\overline{\P} = \P$ on $\cF_t$}\big\}$. We fix $t \in [0,T]$, $\P\in\fP$, and then $\overline{\P}\in\fP(\cF_t,\P)$. It follows from \Cref{lem_conditioning} that
	\[
	\E^{\bar{\P}}\big[Y^{\bar{\P}}_t(T,\xi) \big| \cF_t\big](\omega) 
	= \E^{{\bar{\P}}^{\smalltext{t}\smalltext{,}\smalltext{\omega}}} [Y^{{\bar{\P}}^{\smalltext{t}\smalltext{,}\smalltext{\omega}}}_0(T-t,\xi^{t,\omega}_{t+\cdot})], 
	\; \textnormal{${\overline{\P}}$--a.e. $\omega \in \Omega$.}
	\]
	Since $\bar{\P}^{t,\omega} \in \fP(t,\omega)$ for ${\overline{\P}}$--a.e. $\omega \in \Omega$ by \Cref{ass::measures}.$(ii)$, we find
	\[
	\E^{\bar{\P}}\big[Y^{\bar{\P}}_t(T,\xi) \big| \cF_t\big](\omega) 
	= \E^{{\bar{\P}}^{\smalltext{t}\smalltext{,}\smalltext{\omega}}} [Y^{t,\omega,{\bar{\P}}^{\smalltext{t}\smalltext{,}\smalltext{\omega}}}_0(T-t,\xi^{t,\omega}_{t+\cdot})] 
	\leq \sup_{\P^{\smalltext{\prime}} \in \fP(t,\omega)} \E^{\P^\smalltext{\prime}}\big[Y^{\P^\smalltext{\prime}}_0(T-t,\xi^{t,\omega}_{t+\cdot})\big] = \cY_t(\omega), \; \textnormal{for $\overline{\P}$--a.e. $\omega \in \Omega$.}
	\]
	Since the left-hand side is $\cF_t$-measurable, the right-hand side is $\cF^\ast_t$-measurble, and $\overline{\P}$ agrees with $\P$ on $\cF_t$ and thus also on $\cF^\ast_t$, we obtain
	\[
	\E^{\bar{\P}}\big[Y^{\bar{\P}}_t(T,\xi) \big| \cF_t\big] \leq \cY_t, \; \textnormal{$\P$--a.s.}
	\]
	This then yields
	\[
	\underset{\overline{\P}\in\fP(\cF_t,\P)}{{\esssup}^\P} \E^{\bar{\P}}\big[Y^{\bar{\P}}_t(T,\xi)\big|\cF_t\big]
	\leq \cY_t, \; \textnormal{$\P$--a.s.}, \; t \in [0,T], \; \P\in\fP.
	\]
	
	\medskip
	We establish the other inequality using a selection theorem for upper semi-analytic functions. Fix $\P \in \fP$, $t \in [0,T]$, and a positive number $\varepsilon$. By \cite[Proposition 7.50]{bertsekas1978stochastic}, there exists a universally measurable map $\Q^\prime : \Omega \longrightarrow \fP(\Omega)$ such that 
	\begin{equation}\label{eq::analytic_selection2}
		\Q^\prime(\omega) \in \fP(t,\omega),
		\;
		\E^{\Q^\smalltext{\prime}(\omega)} \big[Y^{{\Q^\smalltext{\prime}(\omega)}}_0(T-t,\xi^{t,\omega}_{t+\cdot})\big] 
		\geq \big(\cY_t(\omega) - \varepsilon\big) \1_{\{\cY_\smalltext{t} < \infty\}}(\omega) + \frac{1}{\varepsilon}\1_{\{\cY_\smalltext{t} = \infty\}}(\omega),
	\end{equation}
	for each $\omega \in \Omega$ for which $\fP(t,\omega) \neq \varnothing$. The map $\widetilde\Q : \Omega \longrightarrow \fP(\Omega)$ defined by $\widetilde\Q(\omega) \coloneqq \Q^\prime({\omega_{\cdot \land t}})$ is $\cF^\ast_t$-measurable and also satisfies \eqref{eq::analytic_selection2} for each $\omega \in \Omega$ with $\fP(t,\omega)\neq \varnothing$. According to \cite[Lemma 1.27]{kallenberg2021foundations}, there exists an $\cF_t$-measurable map $\Q : \Omega \longrightarrow \fP(\Omega)$ satisfying $\Q(\cdot) = \widetilde\Q(\cdot)$ outside a $\P$--null set. Since $\fP(t,\omega) \neq \varnothing$ for $\P$--a.e. $\omega \in\Omega$ by \Cref{ass::measures}.$(ii)$, we have that $\Q(\omega)$ satisfies \eqref{eq::analytic_selection2} for $\P$--a.e. $\omega \in \Omega$. Let
	\begin{equation*}
		\overline{\P}[A] \coloneqq  \int_\Omega\int_\Omega \big(\1_A\big)^{t,\omega}(\omega^\prime)\Q(\omega;\d\omega^\prime)\P(\d\omega), \; A \in \cF.
	\end{equation*}
	Then, $\overline\P \in \fP$ by \Cref{ass::measures}.$(iii)$, $\overline{\P} = \P$ on $\cF_t$, and $\Q(\omega) = \bar\P^{t,\omega}$ for $\overline{\P}$--a.e. $\omega \in \Omega$ and then $\P$--a.e. $\omega \in \Omega$; the map $\omega \longmapsto \bar{\P}^{t,\omega}$ is $\cF_t$-measurable by \cite[Proposition~7.25, p.~133]{bertsekas1978stochastic} and Galmarino's test since $\bar{\P}^{t,\omega} = \bar{\P}^{t,\omega_{\smalltext{\cdot}\smalltext{\land}\smalltext{t}}}$. We obtain from \Cref{lem_conditioning} that
	\begin{align*}
		\E^{\bar\P}\big[Y^{\bar\P}_t(T,\xi)\big|\cF_t\big](\omega) 
		&= \E^{\bar{\P}^{\smalltext{t}\smalltext{,}\smalltext{\omega}}}\big[Y^{\bar{\P}^{\smalltext{t}\smalltext{,}\smalltext{\omega}}}_0(T-t,\xi^{t,\omega}_{t+\cdot})\big] \\
		&= \E^{\Q(\omega)}\big[Y^{\Q(\omega)}_0(T-t,\xi^{t,\omega}_{t+\cdot})\big]
		\geq 
		\big(\cY_t(\omega) - \varepsilon\big) \1_{\{\cY_\smalltext{t} < \infty\}}(\omega) + \frac{1}{\varepsilon}\1_{\{\cY_\smalltext{t} = \infty\}}(\omega), \; \textnormal{$\P$--a.e. $\omega \in \Omega$,}
	\end{align*}
	and then
	\begin{equation*}
		\underset{\P^\prime \in \fP(\cF_\smalltext{t},\P)}{{\esssup}^\P} \E^{\P^\smalltext{\prime}} \big[ Y^{\P^\smalltext{\prime}}_t(T,\xi)\big| \cF_t\big] 
		\geq 
		\big(\cY_t - \varepsilon\big) \1_{\{\cY_\smalltext{t} < \infty\}} + \frac{1}{\varepsilon}\1_{\{\cY_\smalltext{t} = \infty\}}, \; \text{$\P$--a.s.}
	\end{equation*}
	Since $\varepsilon$ was arbitrary, we obtain the desired inequality.
	
	\medskip
	{\bf Step 3:} We show that $\cY$ is an $(\F^\ast,\P)$--super-martingale on $[0,T]$ for every $\P \in \fP$; here $\F^\ast \coloneqq (\cF^\ast_t)_{t \in [0,T]}$, where $\cF^\ast_t$ is the universal completion of $\cF_t$. We fix $\P \in \fP$ and first establish that, for $t \in [0,T]$, the family
	\begin{equation}\label{eq_upward_directed}
		\big\{\E^{\bar{\P}}[Y^{\bar{\P}}_t(T,\xi) | \cF_t] \,\big|\, \overline{\P} \in \fP(\cF_t,\P) \big\}
	\end{equation}
	is $\P$--upward directed. For $\P_1$ and $\P_2$ in $\fP(\cF_t,\P)$, we consider the set
	\[
	B \coloneqq \big\{\E^{\P_\smalltext{1}}[Y^{\P_\smalltext{1}}_t(T,\xi) | \cF_t] \geq \E^{\P_\smalltext{2}}[Y^{\P_\smalltext{2}}_t(T,\xi) | \cF_t] \big\} \subseteq \Omega,
	\]
	and define
	\begin{equation*}
		\Q(\omega;A) \coloneqq\P^{t,\omega}_1[A]\1_{B}(\omega) + \P^{t,\omega}_2[A]\1_{B^\smalltext{c}}(\omega), \; \omega \in \Omega.
	\end{equation*}
	By \Cref{ass::measures}.$(ii)$, we have that $\P^{t,\omega}_1 \in \fP(t,\omega)$ for $\P_1$--a.e. $\omega \in \Omega$. As before, $\omega \longmapsto \P^{t,\omega}_1$ is $\cF_t$-measurable. Then
	\[
	D \coloneqq \big\{\omega \in \Omega \,\big|\, (\omega,\P^{t,\omega}_1) \in \Omega \times \fP(t,\omega)\big\} = \big\{\omega \in \Omega \,\big|\, (\omega_{\cdot\land t},\P^{t,\omega_{\smalltext{\cdot}\smalltext{\land}\smalltext{t}}}_1) \in \Omega \times \fP(t,\omega)\big\}
	\]
	is $\cF^\ast_t$-measurable since it is the preimage of an analytic set in $\Omega\times\fP(\Omega)$ by \Cref{ass::measures}.$(i)$ under the $\cF_t$-measurable map
	\[
	\Omega \ni \omega \longmapsto (\omega_{\cdot\land t},\P^{t,\omega}_1) \in \Omega \times \fP(\Omega).
	\]
	Since $\P$ agrees with $\P_1$ on $\cF_t$, and thus on $\cF^\ast_t$, this yields $\P[D] = \P_1[D] = 1$. In particular, this implies that $\P^{t,\omega}_1 \in \fP(t,\omega)$ for $\P$--a.e. $\omega \in \Omega$. The analogous argument also applies to $\P_2$, and thus $\P^{t,\omega}_2 \in \fP(t,\omega)$ for $\P$--a.e. $\omega \in \Omega$. Then $\Q(\omega;\d\omega^\prime)$ is a kernel on $(\Omega,\cF)$ given $(\Omega,\cF_{t})$ satisfying $\Q(\omega;\d\omega^\prime) \in \fP(t,\omega)$ for $\P$--a.e. $\omega \in \Omega$.
	The probability measure
	\begin{align*}
		\overline{\P}[A] \coloneqq & \int_\Omega\int_\Omega \big(\1_A\big)^{t,\omega}(\omega^\prime)\Q(\omega;\d\omega^\prime)\P(\d\omega)
		= \E^\P[\P_1[A|\cF_{t}]\1_B + \P_2[A|\cF_{t}]\1_{B^\smalltext{c}}] = \P_1[A \cap B] + \P_2[A \cap B^c], \; A \in \cF,
	\end{align*}
	is therefore an element of $\fP$ by \Cref{ass::measures}.$(iii)$. Since $\overline{\P}$ agrees with $\P$ on $\cF_{t}$ and $\bar{\P}^{t,\omega}(\d\omega^\prime) = \Q(\omega,\d\omega^\prime)$ for $\overline{\P}$--a.e. $\omega \in \Omega$ and then $\P$--a.e. $\omega \in \Omega$, we obtain with \Cref{lem_conditioning} that
	\begin{align*}
		\E^{\bar{\P}}\big[Y^{\bar{\P}}_{t}(T,\xi)\big|\cF_{t}\big](\omega)
		&= \E^{\Q(\omega)}[Y^{\Q(\omega)}_{0}(T-t,\xi^{t,\omega}_{t+\cdot})] \\
		&= \E^{\Q(\omega)}[Y^{\Q(\omega)}_{0}(T-t,\xi^{t,\omega}_{t+\cdot})]\1_B(\omega) + \E^{\Q(\omega)}[Y^{\Q(\omega)}_{0}(T-t,\xi^{t,\omega}_{t+\cdot})]\1_{B^\smalltext{c}}(\omega) \\
		&= \E^{\P^{\smalltext{t}\smalltext{,}\smalltext{\omega}}_\smalltext{1}}[Y^{\P^{\smalltext{t}\smalltext{,}\smalltext{\omega}}_\smalltext{1}}_{0}(T-t,\xi^{t,\omega}_{t+\cdot})]\1_B(\omega) + \E^{\P^{\smalltext{t}\smalltext{,}\smalltext{\omega}}_\smalltext{2}}[Y^{\P^{\smalltext{t}\smalltext{,}\smalltext{\omega}}_\smalltext{2}}_{0}(T-t,\xi^{t,\omega}_{t+\cdot})]\1_{B^\smalltext{c}}(\omega) \\
		&= \E^{\P_\smalltext{1}}[Y^{\P_\smalltext{1}}_{t}(T,\xi)|\cF_{t}](\omega)\1_B(\omega) + \E^{\P_\smalltext{2}}[Y^{\P_\smalltext{2}}_{t}(T,\xi)|\cF_{t}](\omega)\1_{B^\smalltext{c}}(\omega) \\
		&\geq \max\big\{\E^{\P_\smalltext{1}}[Y^{\P_\smalltext{1}}_{t}(T,\xi)|\cF_{t}](\omega),\E^{\P_\smalltext{2}}[Y^{\P_\smalltext{2}}_{t}(T,\xi)|\cF_{t}](\omega)\big\}, \; \text{for $\P$--a.e. $\omega \in \Omega$.}
	\end{align*}
	Since $\E^{\bar{\P}}\big[Y^{\bar{\P}}_{t}(T,\xi)\big|\cF_{t}\big]$ is an element of the family in \eqref{eq_upward_directed}, this implies that the family is $\P$--upward directed.
	
	\medskip
	We turn to the super-martingale property. Let $0 \leq s \leq t \leq T$, and let $(\P_n)_{n\in\N} \subseteq \fP(\cF_t,\P)$ be such that
	\[
	\E^{\P_\smalltext{n}}[Y^{\P_\smalltext{n}}_t(T,\xi)|\cF_t] \nearrow \cY_t, \; n \longrightarrow \infty, \; \textnormal{$\P$--a.s.};
	\]
	this is where we use the $\P$--upward directedness of the family in \eqref{eq_upward_directed}; see \cite[Proposition VI-1-1, p.~121]{neveu1975discrete}.
	Since $(\P_n)_{n \in \N}\subseteq \fP(\cF_s,\P)$, we find
	\begin{align*}
		\E^\P[\cY_t|\cF_s] 
		&= \lim_{n \rightarrow\infty} \E^\P\big[ \E^{\P_\smalltext{n}}[Y^{\P_\smalltext{n}}_t(T,\xi)|\cF_t] \big|\cF_s\big] = \lim_{n \rightarrow\infty} \E^{\P_\smalltext{n}}\big[ \E^{\P_\smalltext{n}}[Y^{\P_\smalltext{n}}_t(T,\xi)|\cF_t] \big|\cF_s\big] = \lim_{n \rightarrow\infty} \E^{\P_\smalltext{n}}[Y^{\P_\smalltext{n}}_t(T,\xi)|\cF_s] \\
		&= \lim_{n \rightarrow\infty} \E^{\P_\smalltext{n}}\big[\E^{\P_\smalltext{n}}[Y^{\P_\smalltext{n}}_t(T,\xi)|\cF_{s+}]\big|\cF_s\big] 
		\leq \liminf_{n \rightarrow\infty} \E^{\P_\smalltext{n}}[Y^{\P_\smalltext{n}}_s(T,\xi)\big|\cF_s\big] 
		\leq \underset{\P^\smalltext{\prime}\in\fP(\cF_\smalltext{s},\P)}{{\esssup}^{\P}} \E^{\P^\smalltext{\prime}}[Y^{\P^\smalltext{\prime}}_s(T,\xi)|\cF_s]
		= \cY_s, \; \textnormal{$\P$--a.s.} 
	\end{align*}
	Here we used the fact that $Y^{\P_\smalltext{n}}(T,\xi)$ is an $(\F_{+},\P_n)$--super-martingale. This implies $\E^\P[\cY_t|\cF^\ast_s] \leq \cY_s$, $\P$--a.s., which is the desired martingale property since $\cY$ is $\F^\ast$-adapted. The required integrability can be argued as follows. First, we have for any $\P\in\fP$ that
	\[
	- \E^{\P}\big[|\xi_T|\big|\cF_t\big] \leq \cY_t 
	\leq \underset{\bar{\P}\in\fP(\cF_\smalltext{t},\P)}{{\esssup}^{\P}} \E^{\bar{\P}}[Y^{\bar{\P}}_t(T,\xi^+)|\cF_t], \; \textnormal{$\P$--a.s.},
	\]
	where $\xi^+ \coloneqq \max\{0,\xi\}$. Similarly to before, we can find $(\overline{\P}_n)_{n \in \N} \subseteq \fP(\cF_t,\P)$ such that
	\[
	\E^{\bar{\P}_\smalltext{n}}[Y^{\bar{\P}_\smalltext{n}}_t(T,\xi^+)|\cF_t]
	\nearrow
	\underset{\bar{\P}\in\fP(\cF_\smalltext{t},\P)}{{\esssup}^{\P}} \E^{\bar{\P}}[Y^{\bar{\P}}_t(T,\xi^+)|\cF_t], \; n \longrightarrow \infty, \; \textnormal{$\P$--a.s.},
	\]
	which yields
	\begin{align}\label{eq_integrability_Y_0}
		\E^\P\big[|\cY_t|\big] 
		&\leq \E^{\P}\big[|\xi_T|\big] + \E^\P\Bigg[ \underset{\bar{\P}\in\fP(\cF_\smalltext{t},\P)}{{\esssup}^{\P}} \E^{\bar{\P}}[Y^{\bar{\P}}_t(T,\xi^+)|\cF_t] \Bigg] \nonumber\\
		&= \E^{\P}\big[|\xi_T|\big] + \lim_{n \rightarrow\infty}\E^\P\Big[\E^{\bar{\P}_\smalltext{n}}[Y^{\bar{\P}_\smalltext{n}}_t(T,\xi^+)|\cF_t] \Big] \nonumber\\
		&= \E^{\P}\big[|\xi_T|\big] + \lim_{n \rightarrow\infty}\E^{\bar{\P}_\smalltext{n}}\big[Y^{\bar{\P}_\smalltext{n}}_t(T,\xi^+)\big] \nonumber\\
		&\leq  \E^{\P}\big[|\xi_T|\big] + \sup_{\P^\smalltext{\prime}\in\fP}\E^{\P^\smalltext{\prime}}[Y^{\P^\smalltext{\prime}}_t(T,\xi^+)] \nonumber\\
		&= \E^{\P}\big[|\xi_T|\big] + \sup_{\P^\smalltext{\prime}\in\fP}\E^{\P^\smalltext{\prime}}\bigg[\underset{\tau\in\cT_{\smalltext{t}\smalltext{,}\smalltext{T}}(\G_\smalltext{+})}{{\esssup}^{\P^\smalltext{\prime}}} \E^{\P^\smalltext{\prime}}[\xi^+_\tau|\cG_{t+}]\bigg] 
		\leq \E^{\P}\big[|\xi_T|\big] +  \sup_{\P^\smalltext{\prime}\in\fP}\sup_{\tau\in\cT_{\smalltext{t}\smalltext{,}\smalltext{T}}(\G_\smalltext{+})}\E^{\P^\smalltext{\prime}}[\xi^+_\tau] < \infty.
	\end{align}
	The last inequality follows from the (classical) fact that the family of conditional expectations
	\[
	\big\{ \E^{\P^\smalltext{\prime}}[\xi^+_\tau|\cG_{t+}] \,\big|\, \tau \in \cT_{t,T}(\G_+) \big\}
	\]
	is $\P^\prime$--upward directed. We have shown that $\cY$ is an $(\F^\ast,\P)$--super-martingale on $[0,T]$ for each $\P\in\fP$.
	
	\medskip
	{\bf Step 4:} Define $Y^0 = (Y^0_t)_{t \in [0,T]}$ by
	\begin{equation}\label{def_Y0}
		Y^0_t \coloneqq \limsup_{\D_\smallertext{+} \ni s \downarrow\downarrow t} \cY_t, \; t \in [0,T), \; \textnormal{and} \; Y^0_T \coloneqq \xi_T,
	\end{equation}
	where $\D_+$ denotes the nonnegative dyadic numbers.
	Then $Y^0$ is $\F^\ast_+$-adapted, and the collection $N\subseteq \Omega$ of all $\omega \in\Omega$ where $t \longmapsto Y^0_t(\omega)$ is not a real-valued process with c\`adl\`ag paths is a $\P$--null set for every $\P\in\fP$ by \cite[Theorem VI.2]{dellacherie1982probabilities}, and therefore $N \in \sN^{\fP}$. Moreover, the same result yields that $Y^0$ is an $(\F^\ast_+,\P)$--super-maringale, and that the limit superior in \eqref{def_Y0} is actually a limit outisde a $\P$--null set for every $\P\in\fP$. Therefore, $Y \coloneqq Y^0 \1_{\Omega\setminus N}$ is a real-valued, c\`adl\`ag, $\G_+$-adapted, $(\G_+,\P)$--super-martingale for every $\P\in\fP$.
	
	\medskip
	We show that $Y = (Y_t)_{t \in [0,T]}$ satisfies
	\[
	Y_t = \underset{\bar{\P}\in\fP(\cG_{\smalltext{t}\smalltext{+}},\P)}{{\esssup}^\P} Y^{\bar{\P}}_t(T,\xi), \; \textnormal{$\P$--a.s.}, \; t \in [0,T], \; \P\in\fP.
	\]
	We note that for $t = T$, there is nothing to prove. So we fix $t \in [0,T)$, $\P\in\fP$, and then $\overline{\P}\in\fP(\cG_{t+},\P)$. Let $\tau \in\cT_{t,T}(\G_+)$ be arbitrary. By right-continuity of $\xi$ and $\overline{\P}$--uniform integrability of $\{\xi_{(\tau+1/n)\land T} \,|\, n \in \N\}$, we find
	\[
	\E^{\bar{\P}}\Big[\big|\E^{\bar{\P}}[\xi_\tau|\cG_{t+}] - \E^{\bar{\P}}[\xi_{(\tau+1/n)\land T}|\cG_{t+}]\big|\Big] 
	\leq \E^{\bar{\P}}\big[|\xi_\tau - \xi_{(\tau+1/n)\land T}|\big] \xrightarrow{n\rightarrow\infty} 0,
	\]
	and therefore, upon choosing a suitable subsequence (depending on $\overline{\P}$), if necessary,
	\[
	\E^{\bar{\P}}[\xi_\tau|\cG_{t+}] = \lim_{n \rightarrow\infty} \E^{\bar{\P}}[\xi_{(\tau+1/n)\land T}|\cG_{t+}], \; \textnormal{$\overline{\P}$--a.s.}
	\]
	Thus, for $n$ large enough such that $t + 1/n < T$, and since
	\[
		\E^{\bar{\P}}[Y^{\bar{\P}}_{(t+1/n)}(T,\xi)|\cG_{(t+1/n)}] = \E^{\bar{\P}}[Y^{\bar{\P}}_{(t+1/n)}(T,\xi)|\cF_{(t+1/n)}] \leq \cY_{t+1/n}, \; \textnormal{$\overline{\P}$--a.s.},
	\]
	we obtain
	\begin{align*}
		\E^{\bar{\P}}[\xi_\tau|\cG_{t+}] 
		&= \lim_{n\rightarrow\infty} \E^{\bar{\P}}[\xi_{(\tau+1/n)\land T}|\cG_{t+}] 
		= \lim_{n\rightarrow\infty} \E^{\bar{\P}}\big[\E^{\bar{\P}}[\xi_{(\tau+1/n)\land T}|\cG_{(t+1/n)+}]\big|\cG_{t+}\big]
		\leq \liminf_{n\rightarrow\infty} \E^{\bar{\P}}\big[Y^{\P}_{(t+1/n)}(T,\xi)\big|\cG_{t+}\big] \\
		&= \liminf_{n\rightarrow\infty} \E^{\bar{\P}}\big[\E^{\bar{\P}}[Y^{\bar{\P}}_{(t+1/n)}(T,\xi)|\cG_{(t+1/n)}]\big|\cG_{t+}\big] 
		\leq \liminf_{n\rightarrow\infty} \E^{\bar{\P}}\big[\cY_{t+1/n}\big|\cG_{t+}\big], \; \textnormal{$\overline{\P}$--a.s.}
	\end{align*}
	It follows from \eqref{eq_integrability_Y_0} that the family $\{\cY_{s} \,|\, s\in[0,T]\}$ is bounded in $\L^1(\overline{\P})$.
	Since $\cY_{t+1/n}$ converges to $Y_t$, $\overline{\P}$--a.s., it then follows from the backward super-martingale convergence theorem \cite[Theorem V.30]{dellacherie1982probabilities} that $\cY_{t+1/n}$ converges to $Y_t$ also in $\L^1(\overline{\P})$, and therefore
	\begin{equation}\label{eq_L1_convergence_Y}
		Y_t = \lim_{n\rightarrow\infty}\E^{\overline{\P}}[\cY_{t+1/n}|\cG_{t+}]
	\end{equation}
	in $\L^1(\overline{\P})$. It is straightforward to check that the sequence of conditional expectations on the right is $\overline{\P}$--a.s. monotonically non-decreasing in $n$ by the super-martingale property; therefore the limit also holds in the $\overline{\P}$--a.s. sense. We obtain
	\[
	\E^{\bar{\P}}[\xi_\tau|\cG_{t+}] \leq \liminf_{n\rightarrow\infty} \E^{\bar{\P}}\big[\cY_{t+1/n}\big|\cG_{t+}\big] = Y_t, \; \textnormal{$\overline{\P}$--a.s.},
	\]
	and then
	\[
	Y^{\overline{\P}}_t(T,\xi) 
	= \underset{\tau\in\cT_{\smalltext{t}\smalltext{,}\smalltext{T}}(\G_\smalltext{+})}{{\esssup}^{\bar\P}} \E^{\bar{\P}}[\xi_\tau|\cG_{t+}] \leq Y_t, \; \textnormal{$\overline{\P}$--a.s.}
	\]
	Since the terms above are $\cG_{t+}$-measurable, and since $\P = \overline{\P}$ on $\cG_{t+}$, the above line also holds outisde a $\P$--null set. By arbitraryness of $\overline{\P}\in\fP(\cG_{t+},\P)$, we obtain
	\[
	\underset{\bar{\P}\in\fP(\cG_{\smalltext{t}\smalltext{+}},\P)}{{\esssup}^\P} Y^{\bar{\P}}_t(T,\xi)
	= \underset{\bar{\P} \in \fP_\smalltext{0}(\cG_{t\smalltext{+}},\P)}{{\esssup}^\P} \underset{\tau\in\cT_{\smalltext{t}\smalltext{,}\smalltext{T}}(\G_\smalltext{+})}{{\esssup}^{\bar\P}} \E^{\bar{\P}}[\xi_\tau|\cG_{t+}]
	\leq Y_t , \; \textnormal{$\P$--a.s.}
	\]
	
	\medskip
	We turn to the converse inequality, which can be argued as follows. For each $n \in \N$ with $t+1/n < T$, we find $(\P^m_n)_{m \in \N} \subseteq \fP(\cF_{t+1/n},\P)$ such that
	\[
	\E^{\P^\smalltext{m}_\smalltext{n}} \big[ Y^{\P^\smalltext{m}_\smalltext{n}}_{t+1/n}(T,\xi) \big|\cF_{t+1/n}\big] \nearrow \cY_{t+1/n}, \; m\longrightarrow\infty, \; \textnormal{$\P$--a.s.}
	\] 
	Since $(\P^{m}_{n})_{m\in\N} \subseteq \fP(\cG_{t+},\P)$, we obtain
	\begin{align*}
		\E^\P[\cY_{t+1/n}|\cG_{t+}] 
		&= \lim_{m \rightarrow\infty} \E^{\P}\big[ \E^{\P^\smalltext{m}_\smalltext{n}} [ Y^{\P^\smalltext{m}_\smalltext{n}}_{t+1/n}(T,\xi) |\cF_{t+1/n}] \big|\cG_{t+}\big]
		= \lim_{m \rightarrow\infty} \E^{\P^\smalltext{m}_\smalltext{n}}\big[ \E^{\P^\smalltext{m}_\smalltext{n}} [ Y^{\P^\smalltext{m}_\smalltext{n}}_{t+1/n}(T,\xi) |\cF_{t+1/n}] \big|\cG_{t+}\big] \\
		&= \lim_{m \rightarrow\infty} \E^{\P^\smalltext{m}_\smalltext{n}}\big[ Y^{\P^\smalltext{m}_\smalltext{n}}_{t+1/n}(T,\xi) \big|\cG_{t+}\big] 
		\leq \liminf_{m \rightarrow \infty} Y^{\P^\smalltext{m}_\smalltext{n}}_{t}(T,\xi), \; \textnormal{$\P$--a.s.}, \; n \in \N,
	\end{align*}
	and then
	\[
	\E^\P[\cY_{t+1/n}|\cG_{t+}] 
	\leq \underset{\P^\smalltext{\prime}\in\fP(\cG_{\smalltext{t}\smalltext{+}},\P)}{{\esssup}^\P} Y^{\P^\smalltext{\prime}}_t(T,\xi), \; \textnormal{$\P$--a.s.}, \; n \in \N.
	\]
	With \eqref{eq_L1_convergence_Y} for $\P$ instead of $\overline{\P}$, we find
	\[
	Y_t = \lim_{n \rightarrow\infty} \E^\P[\cY_{t+1/n}|\cG_{t+}] 
	\leq \underset{\P^\smalltext{\prime}\in\fP(\cG_{\smalltext{t}\smalltext{+}},\P)}{{\esssup}^\P} Y^{\P^\smalltext{\prime}}_t(T,\xi)
	= \underset{\P^\smalltext{\prime} \in \fP_\smalltext{0}(\cG_{t\smalltext{+}},\P)}{{\esssup}^\P} \underset{\tau\in\cT_{\smalltext{t}\smalltext{,}\smalltext{T}}(\G_\smalltext{+})}{{\esssup}^{\P^\smalltext{\prime}}} \E^{\P^\smalltext{\prime}}[\xi_\tau|\cG_{t+}], \; \textnormal{$\P$--a.s.}
	\]
	
	\medskip
	{\bf Step 5:} Lastly, we establish \eqref{eq_Y_0_equaly_xi}. Since $\cF_0 = \{\varnothing,\Omega\}$, it follows from \cite[Equation VI.2.3]{dellacherie1982probabilities} that
	\[
	\E^\P[\xi_\tau] \leq \E^\P[Y_0] = \E^\P[Y_0|\cF_0] \leq \cY_0, \; \P\in\fP, \; \tau\in\cT_{0,T}(\G_+).
	\] 
	This yields
	\[
	\sup_{\P\in\fP} \sup_{\tau\in\cT_{\smalltext{0}\smalltext{,}\smalltext{T}}(\G_\smalltext{+})} \E^\P[\xi_\tau] \leq \sup_{\P\in\fP} \E^\P[Y_0] \leq \cY_0 \leq \sup_{\P\in\fP} \sup_{\tau\in\cT_{\smalltext{0}\smalltext{,}\smalltext{T}}(\G_\smalltext{+})} \E^\P[\xi_\tau],
	\]
	where the last inequality follows from arguments similar to those used in deriving \eqref{eq_integrability_Y_0}. This concludes the proof.
	
	\section{Conclusions}\label{sec_conclusions}
	
	The construction of the aggregator in \Cref{thm_construction_capital_process} depends on proving some sort of measurability of the Snell envelope with respect to the probability law. From the construction of the Snell envelope (more precisely, from the construction of the corresponding essential suprema), it is not clear whether the measurability of conditional expectations with respect to the probability law, as established in \cite[Lemma 3.1]{neufeld2014measurability}, carries over to the essential suprema thereof. Therefore, a natural approach is to approximate the Snell envelope by processes for which measurability in the probability law can be established. Here, we choose to view the Snell envelope as the first component of the solution to a reflected BSDE, and approximate it through a penalization scheme by BSDEs. Since the solutions of the corresponding BSDEs are constructed through a fixed-point argument, this leads to the necessary measurability condition of the Snell envelope. We rely on the approximation by penalization results from \cite{klimsiak2015reflected,klimsiak2013dirichlet}, and thus suppose, in particular, that $\xi$ is c\`adl\`ag and of class $(D)$. An additional advantage of considering c\`adl\`ag obstacles is that it allows us to regularise the super-martingale $\cY$ in the proof of \Cref{thm_construction_capital_process}, leading to the desired aggregator $Y$, which will then be a c\`adl\`ag super-martingale relative to $(\G_+,\fP)$. To derive the hedging duality, we subsequently apply the robust optional decomposition for c\`adl\`ag super-martingales.
	
	\medskip
	Given the works on optimal stopping problems or reflected BSDEs with general optional processes \cite{grigorova2017reflected,grigorova2020optimal,kobylanski2012optimal,possamai2024reflections}, it is natural to wonder whether the results presented here can be extended to American options $\xi$ without any regularity assumptions on the paths $t \longmapsto \xi_t(\omega)$. First, the approach outlined above to construct the aggregator of the Snell envelope would need to be modified or extended. We do not exclude the possibility of a different method to prove the measurability of the Snell envelope with respect to the probability law, which could eliminate the need for the penalization approach, particularly as it does not appear to be well suited for irregular obstacles. Secondly, if this leads to an aggregator of Snell envelopes relative to $\G$ instead of $\G_+$, that is only a l\`adl\`ag $(\G,\fP)$--super-martingale, we would then require an optional decomposition for l\`adl\`ag $(\G,\fP)$--super-martingales. We emphasize that for l\`adl\`ag $(\G_+,\fP)$--super-martingales $Y$, the optional decomposition is an application of the following trick; we refer to \cite[p.~51~ff.]{mertens1972theorie} for details. If $Y$ is a l\`adl\`ag $(\G^\P_+,\P)$--super-martingale, the process
	\[
	Y^\prime = Y + \sum_{0 \leq s < \cdot} (Y_s - Y_{s+})
	\]
	is a right-continuous $(\G^\P_+,\P)$--super-martingale. Here, one uses the fact that $Y_{s} \geq Y_{s+}$, $\P$--a.s.; see \cite[Theorem~VI.2]{dellacherie1982probabilities}. Applying the (robust) optional decomposition to $Y^\prime$ then yields an $S$-integrable process $Z$ such that $Y^\prime - Y^\prime_0 - Z \bcdot S$ is $\P$--a.s. non-increasing. This, in turn, implies that
	\[
	Y - Y_0 - Z \bcdot S
	\]
	is also $\P$--a.s. non-increasing. However, this argument crucially relies on $Y$ being a $(\G^\P_+,\P)$--super-martingale. The same reasoning does not apply if $Y$ is only a $(\G^\P,\P)$--super-martingale, since in that case we only know, in general, that $Y_s \geq \E^\P[Y_{s+}|\cG^\P_s]$. Thus, if the aggregator of the Snell envelopes were to be a l\`adl\`ag $(\G,\fP)$--super-martingale, one would need an optional decomposition that applies to l\`adl\`ag super-martingales and filtrations which are not right-continuous.
	
	\medskip
	Another direction for further generalization is to consider random time horizons that may be unbounded. The construction of the aggregator suggests that one could also consider obstacles $\xi$ defined on such horizons, provided that the Snell envelopes can be approximated by BSDEs for which measurability with respect to the probability parameter can be appropriately established. We again do not exclude the possibility that these measurability issues can be overcome with other tools.
	
	\medskip
	In summary, verifying measurability in the context of irregular American options, extending the optional decomposition to l\`adl\`ag super-martingales without right-continuous filtrations, or considering options on random and unbounded horizons would require careful justification, provided such extensions are even possible. We therefore leave this for future research.
	
{\small\bibliography{bibliography}}
	
\end{document}